\newtheorem{alg}{Algorithm}
\newcommand{\C}{\mathscr{C}}
\newcommand{\B}{\mathscr{B}}
\newcommand{\R}{\mathbb{R}}
\newcommand{\Q}{\mathbb{Q}}
\newcommand{\N}{\mathbb{N}}
\newcommand{\fri}[1]{#1^\blacklozenge}
\newcommand{\bvp}[2]{\boxed{\begin{array}{l}#1\\#2\end{array}}}
\DeclareMathOperator{\Ker}{Ker}
\DeclareMathOperator{\img}{Im}
\DeclareMathOperator{\codim}{codim}
\DeclareMathOperator{\ord}{ord}
\DeclareMathOperator{\Ll}{span}
\newcommand{\Bo}{\B^{\perp}}
\newcommand{\Co}{\C^{\perp}}
\newcommand{\dx}{\,\mathrm{d}}
\newcommand{\E}{\textbf{\scshape\texttt E}} 
\newcommand{\A}{\operatorname{A}}
\newcommand{\D}{\operatorname{D}}
\newcommand{\BC}{\operatorname{BC}}
\newcommand{\ES}{\operatorname{ES}}
\newcommand{\BP}{\operatorname{BP}}
\newcommand{\GBP}{\operatorname{GBP}}
\newcommand{\pack}{\texttt{IntDiffOp}}
\newcommand{\op}{\texttt{IntDiffOperations}}
\newcommand{\maple}{\textsc{Maple}}
\newcommand{\tma}{TH$\exists$OREM$\forall$}
\newcommand{\galg}{\mathcal{F}}
\newcommand{\cum}{{\textstyle \varint}}
\newcommand{\dops}{\galg\langle \partial \rangle}
\newcommand{\iops}{\galg\langle \cum \rangle}
\newcommand{\bops}{( \Phi )}
\newcommand{\idops}{\galg_{\Phi}\langle \partial, \cum \rangle}
\newcommand{\negskip}{~\vspace*{-3ex}}
\newcommand{\negsskip}{~\vspace*{-2ex}}
\begin{document}
\mainmatter
\title{Composing and Factoring Generalized Green's Operators and Ordinary Boundary Problems}
\author{Anja Korporal \and Georg Regensburger}
\authorrunning{A.~Korporal \and G.~Regensburger}
\titlerunning{Composing and Factoring Ordinary Boundary Problems}

\institute{Johann Radon Institute for Computational and Applied Mathematics (RICAM),\\
Austrian Academy of Sciences, Linz, Austria}

\maketitle

\begin{abstract}
 We consider solution operators of linear ordinary boundary problems with ``too many'' boundary conditions, which are not
 always solvable. These generalized Green's operators are a  certain kind of generalized
 inverses of differential operators. We answer the question when the product of two generalized Green's operators 
is again a generalized Green's operator for the product of the corresponding differential operators 
and which boundary problem it solves.
 Moreover, we show that---provided a
 factorization of the underlying differential operator---a generalized boundary problem
 can be factored into lower order problems corresponding to a factorization of the respective Green's operators.
 We illustrate our results by examples using the  \maple\ package \pack, where the presented algorithms  are implemented.
\end{abstract}

\noindent \textbf{Keywords:} Linear boundary problem, singular boundary problem, generalized Green's operator, reverse order law, factorization, integro-differential operator, ordinary differential equation.

\section{Introduction}

Although linear boundary problems play an important role in
applied mathematics~\cite{Agarwal1986,Agarwal2008,Duffy2001,Stakgold1979},
there is little algebraic theory and algorithmic treatment of boundary problems.
Current computer algebra systems provide many symbolic tools for
differential equations, but boundary conditions are usually left to a backward solving procedure,
which---depending on the forcing function and on the conditions---may or may not work.

In~\cite{Rosenkranz2005}, a new operator based approach for symbolic computation with linear ordinary boundary problems was presented, 
which has constantly been extended over the last 
years~\cite{RegensburgerRosenkranz2009,RosenkranzRegensburger2008a,RosenkranzRegensburgerTecBuchberger2009}; 
see also~\cite{RosenkranzRegensburgerTecBuchberger2012} for a recent overview. 
The results needed are summarized in Section~\ref{sec:SC}.

The most recent algorithms for regular boundary problems (that are uniquely solvable) are implemented 
in the \tma\ system \cite{RosenkranzRegensburgerTecBuchberger2009,Tec2011},
and in the \maple\ package \pack~\cite{KorporalRegensburgerRosenkranz2011,KorporalRegensburgerRosenkranz2012,Korporal2012}.
They do not only allow to compute solution operators (Green's operators), but also to factor regular boundary problems 
into lower order problems provided a factorization of the underlying differential operator.
The factorization of boundary problems relies on the  multiplicative structure introduced in~\cite{RegensburgerRosenkranz2009},
which for regular problems corresponds to the multiplication of the respective solution operators in reverse order.

The \pack\ package also provides support for the class of singular problems treated in this paper:
We consider boundary problems where the differential equation per se is solvable, but where inconsistent 
boundary conditions allow solutions only for forcing functions satisfying suitable \emph{compatibility conditions}. As a simple example, consider the boundary problem 
 \begin{equation}
  \label{eq:bpsingular}
  \bvp{u''(x) = f(x)}{u(1)=u'(1)=u'(0)=0,}
 \end{equation}
where the forcing function $f$ clearly has to satisfy the compatibility condition  $\textstyle{\int_0^1} f(\xi)\dx\xi = 0$; see Example~\ref{ex:CC} for the corresponding code.

While Green's operators for regular boundary problems are right inverses of the differential operator, solution operators for singular problems can be described
as generalized inverses. 
Algorithms for computing such generalized Green's operators and the compatibility conditions of a singular boundary problem are 
presented in~\cite{KorporalRegensburgerRosenkranz2011}; we briefly recall the basic results in Section~\ref{sec:GGOP}. 
For singular boundary problems and 
generalized or modified Green's functions in ana\-ly\-sis, we refer for example to~\cite{Stakgold1979}
and~\cite{Loud1970}, and in the context of generalized inverses to
\cite[Sec.~9.4]{Ben-IsraelGreville2003}, \cite{Boichuk2004}, and~\cite[Sec.~H]{NashedRall1976}.

The goal of this paper is to extend the factorization algorithm for regular boundary 
problems~\cite{RegensburgerRosenkranz2009,RosenkranzRegensburgerTecBuchberger2009} to generalized
boundary problems. To this end, we have earlier investigated the multiplicative
structure of generalized inverses in~\cite{KorporalRegensburger2013}. It turns out that, in contrast to the regular case, the product of generalized Green's operators is not, in general, itself a 
generalized Green's operator. 

In Section~\ref{sec:Composition}, we define a new composition of generalized boundary problems, 
which includes the composition of regular problems, based on results from~\cite{Korporal2012,KorporalRegensburger2013}. 
Then we discuss algorithmic methods for testing when the so-called reverse order law holds, that is, when the product 
of  two generalized Green's operators is again a generalized Green's operator of the product of the corresponding differential operators. 
Moreover, if the reverse order law holds, we
can algorithmically determine which boundary problem is solved by the product of two generalized Green's operators.
We present a first implementation of the new algorithms in the framework of integro-differential operators  
and illustrate our results by examples, carried out using the \pack\ 
package\footnote{Available at \url{http://www.risc.jku.at/people/akorpora/index.html}.}.

Building on results of~\cite{Korporal2012}, we discuss in Section~\ref{sec:Factorization}
a new algorithm and implementation for factoring a generalized boundary problem, such that the 
factorization corresponds to the product of the respective generalized Green's operators. We illustrate the algorithm also with examples for differential equations with non-constant coefficients. The right-hand factor computed by this method will always be a regular boundary problem. However,
we also present some first steps for obtaining other possible factorizations in Section~\ref{sec:outlook}.

\section{Symbolic Computation for Boundary Problems} \label{sec:SC}

The algebraic framework for treating linear ordinary boundary problems with symbolic methods is given by the algebra of integro-differential
operators over an ordinary integro-differential algebra. This algebra was introduced  
in~\cite{Rosenkranz2005,RosenkranzRegensburger2008a} 
as a uniform language to express boundary 
problems---meaning differential equations and boundary conditions---as well as their Green's operators, 
which are integral operators.
We review the basic properties and refer the reader to~\cite{RosenkranzRegensburger2008a} 
and~\cite{RosenkranzRegensburgerTecBuchberger2012} for additional details. See also~\cite{Bavula2011,Bavula2013} for an extensive 
study on algebraic properties of integro-differential operators with polynomial coefficients and a single
evaluation (corresponding to initial value problems).

Extending \emph{differential algebras}~\cite{Kolchin1973}, where derivations are linear operators satisfying 
the Leibniz rule, \emph{integro-differential algebras} $(\galg, \partial, \cum)$ are defined as a differential 
algebra $(\galg, \partial)$ along with an ``integral'' $\cum$ 
that is a linear  right inverse of the derivation $\partial$ and satisfies an algebraic version of 
the \emph{integration by parts} formula. For the similar notion of differential Rota-Baxter algebras, see~\cite{GuoKeigher2008} and for a 
detailed comparison~\cite{GuoRegensburgerRosenkranz2013}. 

We call an integro-differential algebra over a field $K$ \emph{ordinary} if $\ker \partial = K$. 
The standard example of an ordinary integro-differential algebra is given by $\galg = C^\infty (\mathbb{R})$ with the usual
derivation and the integral operator $\cum \colon f \mapsto \cum_a^x f(\xi)\dx \xi$ for a fixed $a \in \R$.
 We call $\E = 1-\cum \circ \partial$ the \emph{evaluation} of
$\galg$. 
For representing not only initial value problems, but arbitrary boundary problems,
we include additional characters (multiplicative linear functionals)
$\E_c\colon f \mapsto f(c)$ at various evaluation
points $c \in \mathbb{R}$.

We write $\dops$ for the the ring of differential operators with coefficients in $\galg$ and $\iops$ for integral operators
of the form
$\sum_{i=1}^{n} f_i \cum g_i$ with $f_i, g_i \in \galg$.
For a set of characters $\Phi$, the corresponding (two-sided ideal of) \emph{boundary operators} $\bops$ are finite sums 
\begin{equation*}
   \sum_{\varphi \in \Phi} \left( \sum_{i \in \N}
      f_{i, \varphi} \, \varphi \partial^i + \sum_{j \in \N} g_{j, \varphi} \varphi \cum h_{j, \varphi} \right)
\end{equation*}
with $f_{i, \varphi}, g_{j, \varphi}, h_{j, \varphi} \in \galg$. 
 \emph{Stieltjes boundary conditions} are boundary operators where all the $f_{i, \varphi}\in K$ are constants and $ g_{j, \varphi}=1$, so that they act on $\galg$ as linear functionals.

The 
integro-differential operators $\idops$ are given as a direct sum of $K$-vector spaces
\begin{equation*}
 \idops = \dops \dotplus \iops \dotplus \bops.
\end{equation*}
The representation of integral operators and boundary conditions is not unique due to linearity of $\cum$, for 
normal forms of integro-differential operators; see~\cite{RosenkranzRegensburger2008a}.

For solving boundary problems, we restrict ourselves to monic (i.e., having leading coefficient $1$) differential 
operators  that have a \emph{regular fundamental system} $u_1, \ldots, u_n$, which means that the associated
Wronskian matrix
\begin{equation*}
  W(u_1,\ldots,u_n) = \begin{pmatrix}
    u_1 & \cdots & u_n\\
    u_1' & \cdots & u_n'\\
    \vdots & \ddots & \vdots\\
    u_1^{(n-1)} & \cdots & u_n^{(n-1)}
  \end{pmatrix}
\end{equation*}
is regular. Then the \emph{fundamental right inverse} (solving the initial value problem) can be computed
as an integro-differential operator in $\idops$ by the  
\emph{variation of constants} formula
\begin{equation} \label{eq:VarConst}
 \fri{T} = \sum_{i=1}^n u_i \cum d^{-1}\,d_i,
\end{equation}
 where $d$ is the determinant of the Wronskian matrix, $d_i= \det W_i$,
  and $W_i$ is the matrix obtained from $W$ by replacing the $i$th
  column by the $n$th unit vector.

Algorithms for solving and factoring
regular ordinary boundary problems are described in \cite{RosenkranzRegensburger2008a,RegensburgerRosenkranz2009}.
We recall the basic definitions and results.
A \emph{boundary problem} is defined as a pair $(T, \B)$ consisting of 
a monic differential operator  $T$ of order $n$ with a regular fundamental system and a space of boundary conditions
$\B = \Ll(\beta_1 , \ldots, \beta_n)$ generated by $n$ linearly independent Stieltjes boundary conditions. 
We think of $T$ as a surjective linear operator between suitable $K$-vector spaces of ``functions'' $T\colon V \to W$ and 
the boundary conditions $\beta_i$ acting as linear functionals from the dual space $V^*$.
We describe the subspace of functions satisfying the boundary conditions via the orthogonal
\begin{equation*}
 \Bo = \{f \in V \mid \beta(f) = 0 \text{ for all } \beta \in \B \},
\end{equation*}
i.e., $u \in V$ is a solution of $(T,  \B)$ for a given forcing function $f \in W$, if
\begin{equation*}
 Tu=f \quad \text{and} \quad u\in \Bo.
\end{equation*}
One easily checks that a boundary problem has a unique solution for each forcing function $f$ if and only if
$\Ker T \dotplus \Bo = V$: The sum $\Ker T + \Bo =V$ ensures the existence of a solution, the trivial intersection
$\Ker T \cap \Bo = \{0\}$ implies its uniqueness. Such boundary problems are called \emph{regular}, and 
the solution operator $G$ that maps each $f \in W$ to its unique solution $u \in V$ is called \emph{Green's operator}.
The Green's operator is a right inverse of $T$ and can be computed as $G = (1-P)\fri{T}$, 
where $P$ is the projector onto $\Ker T$ along $\Bo$, see also \cite{RosenkranzRegensburger2008a}.
We will use the notation $G = (T, \B)^{-1}$ for the Green's operator $G$ of a regular problem $(T, \B)$.

Regularity of a boundary problem can be tested via its \emph{evaluation matrix}: Let $(u_1, \ldots, u_n)$
be a basis of $\Ker T$ and $(\beta_1, \ldots, \beta_n)$ a basis of $\B$. Then $(T, \B)$ is regular iff
\begin{equation*}
 \beta(u) = \begin{pmatrix}
             \beta_1(u_1) & \ldots & \beta_1(u_n)\\
             \vdots & \ddots & \vdots\\
             \beta_n(u_1) & \ldots & \beta_n(u_n)
            \end{pmatrix} \in K^{n\times n}
\end{equation*}
is regular.

\newpage

For factoring regular boundary problems and their Green's operators, we recall the composition of boundary problems  in~\cite{RegensburgerRosenkranz2009}. This composition corresponds to the product of their Green's operators in reverse order. Since $G_1=(T_1, \B_1)^{-1}$ and $G_2=(T_2, \B_2)^{-1}$ are right inverses of $T_1 \colon V \to W$ 
and $T_2\colon U \to V$, the product
$G_2G_1$ obviously is a right inverse of $T_1T_2$. Defining the composition of boundary problems as
\begin{equation} \label{eq:compo}
 (T_1, \B_1) \circ (T_2, \B_2) = (T_1T_2, \B_2 + T_2^*(\B_1)),
\end{equation}
where $T_2^* \colon W^* \to V^*$ denotes the transpose map  $\beta \mapsto \beta \circ T_2$, the reverse order~law
\begin{equation} \label{eq:ROL}
 \bigl( (T_1, \B_1) \circ (T_2, \B_2) \bigr)^{-1}= (T_2, \B_2)^{-1} (T_1, \B_1)^{-1}
\end{equation}
always holds for regular problems.
Using this multiplicative structure of boundary problems, it is always possible to split a regular boundary problem $(T, \B)$
into regular lower order problems, provided there exists a factorization $T= T_1T_2$ of
the differential operator; see~\cite{RosenkranzRegensburgerTecBuchberger2009}
for a constructive proof that requires only a fundamental system of $T_2$.

We conclude this section with a remark on the ``function spaces'' $V$ and $W$ on which we let the differential
operator $T$ act. The assumption $V=W=\galg$ used for example 
in~\cite{RosenkranzRegensburger2008a,KorporalRegensburgerRosenkranz2011}---ensuring well-definedness of arbitrary 
operations---for some applications is too restrictive. For a given boundary problem $(T, \B)$ of order $n$, 
it is for example sufficient to consider $n$ times continuously differentiable functions, i.e., $V = C^n[a,b]$ 
and $W =C[a,b]$, where $a$ and $b$ are the minimal and maximal evaluation point appearing in the Stieltjes conditions 
of $\B$. 
Similarly, for composing boundary problems $(T_1, \B_1)$ and $(T_2, \B_2)$ of order $m$ and $n$, it suffices to
restrict the domains as to consider $T_2 \colon C^{m+n}[a,b] \to C^m[a,b]$ and $T_1 \colon C^m[a,b] \to C[a,b]$ 
(with suitable choices of $a$ and $b$).

\section{Generalized Green's Operators} \label{sec:GGOP}

In~\cite{KorporalRegensburgerRosenkranz2011}, several methods of the previous section are generalized
to boundary problems with ``too many'' boundary conditions, meaning that $\ord T < \dim \B$.
These problems are not solvable for all forcing functions $f$; but we keep the condition $\Ker T \cap \Bo = \{0\}$
to ensure unique solutions. We briefly recall the basic results from~\cite{KorporalRegensburgerRosenkranz2011}.

\begin{definition}
 We call a boundary problem $(T, \B)$ \emph{semi-regular} if 
 \begin{equation} 
  \Ker T \cap \Bo = \{0\}.
 \end{equation}
 Let $(T, \B)$ be a boundary problem with $T\colon V \to W$ and $E \leq W$. We call the triple $(T, \B, E)$ regular, if
 $(T, \B)$ is  semi-regular and 
 \begin{equation} \label{eq:DirSum2}
  T(\Bo) \dotplus E = W.
 \end{equation}
 In this case, we call $E$ an \emph{exceptional space} for $(T, \B)$ and $(T, \B, E)$ a \emph{generalized 
 boundary problem}.
\end{definition}

For simplicity, we also refer to the triple $(T,\B,E)$ as a boundary problem and identify
$(T,\B)$ with $(T,\B,\{0\})$. One easily checks that regularity of $(T, \B)$ implies regularity of
$(T, \B, \{0\})$ and vice versa, see also~\cite[Sec.~4.1]{Korporal2012}.

For solving generalized boundary problems in analysis 
the forcing function $f$ is projected onto $T(\Bo)$, the space of \emph{admissible forcing functions};  see for example~\cite{Stakgold1979,Loud1970}. This leads to the following algebraic definition of generalized Green's operators.

\begin{definition} 
 Let $(T, \B, E)$ be regular with $T \colon V \to W$, and let $Q$ be the projector onto $T(\Bo)$ along $E$. 
 Then $u \in V$ is called a \emph{solution} for $f \in W$ if
 \begin{equation} \label{eq:GenSolution}
  Tu =Qf \quad \text{and} \quad u \in \Bo.
 \end{equation}
 The \emph{generalized Green's operator} maps each $f \in W$ to the unique solution 
 according to \eqref{eq:GenSolution}.
\end{definition}

As in the regular case, we use the notation $G = (T, \B, E)^{-1}$ for the generalized Green's operator for a regular boundary problem $(T, \B,E)$.

For testing semi-regularity of a boundary problem $(T, \B)$ with $\ord T =m$ and $\dim \B =n$, 
we have to check whether the associated evaluation matrix 
$\beta(u) \in K^{n\times m}$ has full column rank, see also~\cite[Lem.~1]{KorporalRegensburgerRosenkranz2011}.
Condition~\eqref{eq:DirSum2} can be checked analogously, provided an implicit description of $T(\Bo)$ via 
\emph{compatibility conditions} $\C = T(\Bo)^{\perp}$:
If $(\gamma_1, \ldots, \gamma_r)$ is a basis of $\C$ and $(e_1, \ldots, e_r)$ is a basis of $E$, then the
corresponding evaluation matrix $\gamma(e) \in K^{r\times r}$ has to be regular.
Moreover, in~\cite[Prop.~1]{KorporalRegensburgerRosenkranz2011} a method is presented to compute a 
basis of $\C$; we have
\begin{equation}\label{eq:CompCond}
 \C = G^*(\B \cap (\Ker T)^{\perp})
\end{equation}
for any right inverse $G$ of $T$. 
Note that equation~\eqref{eq:CompCond} requires computing the intersection of the finite dimensional space $\B$
with the finite codimensional space $(\Ker T)^{\perp}$, which can be done using the following observation; see for example~\cite{KorporalRegensburger2013}.

\begin{lemma} \label{lem:intersections}
 Let the subspaces $U \leq V$ and $\B \leq V^*$ be generated respectively by $u=(u_1, \ldots, u_m)$ 
 and $\beta=(\beta_1, \ldots, \beta_n)$.
 Let $k^1, \ldots, k^r \in F^m$ be a basis of $\Ker \beta(u)$, and $\kappa^1, \ldots, \kappa^s \in F^n$ a basis 
 of $\Ker (\beta(u))^T$.   Then 
 \begin{enumerate}[label=(\roman*)]
\item $U\cap \Bo$ is generated by
   $\sum_{i=1}^m k^1_i u_i , \ldots, \sum_{i=1}^m k^r_i u_i$ and
 \item $U^{\perp} \cap \B $ is generated by
  $\sum_{i=1}^n \kappa^1_i\beta_i , \ldots, \sum_{i=1}^n \kappa^s_i \beta_i$.
 \end{enumerate}
\end{lemma}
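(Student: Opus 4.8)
The plan is to translate both statements into statements about the matrix $\beta(u) \in F^{m \times n}$ and then read off the result from linear algebra over $F$. Recall that $\beta(u)_{ji} = \beta_j(u_i)$, so that for a coefficient vector $c = (c_1, \ldots, c_m) \in F^m$ the element $v = \sum_i c_i u_i \in U$ satisfies $\beta_j(v) = \sum_i \beta_j(u_i) c_i = (\beta(u) c)_j$. Hence $v \in \Bo$, i.e.\ $\beta_j(v) = 0$ for all $j$, if and only if $\beta(u) c = 0$, i.e.\ $c \in \Ker \beta(u)$. This is the bridge between the "function side'' and the "matrix side.''

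For part (i): First I would note that since $u_1, \ldots, u_m$ generate $U$, every element of $U \cap \Bo$ is of the form $\sum_i c_i u_i$ for some $c \in F^m$, and by the bridge observation such an element lies in $\Bo$ precisely when $c \in \Ker \beta(u)$. Thus $U \cap \Bo = \{\sum_i c_i u_i \mid c \in \Ker \beta(u)\}$, which is exactly the image of $\Ker \beta(u)$ under the linear map $F^m \to V$, $c \mapsto \sum_i c_i u_i$. Since $k^1, \ldots, k^r$ is a basis of $\Ker \beta(u)$, their images $\sum_i k^\ell_i u_i$ span $U \cap \Bo$, which is the claim.

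For part (ii): I would proceed dually. The space $\B$ is generated by $\beta_1, \ldots, \beta_n$, so a general element is $\gamma = \sum_j \lambda_j \beta_j$ with $\lambda \in F^n$. Such a $\gamma$ lies in $U^{\perp}$ iff $\gamma(u_i) = 0$ for all generators $u_i$ of $U$, i.e.\ iff $\sum_j \lambda_j \beta_j(u_i) = 0$ for all $i$, which says exactly that $\lambda$ annihilates every column of $\beta(u)$, i.e.\ $(\beta(u))^T \lambda = 0$, i.e.\ $\lambda \in \Ker (\beta(u))^T$. So $U^{\perp} \cap \B$ is the image of $\Ker (\beta(u))^T$ under $\lambda \mapsto \sum_j \lambda_j \beta_j$, and since $\kappa^1, \ldots, \kappa^s$ is a basis of $\Ker (\beta(u))^T$, the elements $\sum_j \kappa^\ell_j \beta_j$ span $U^{\perp} \cap \B$.

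The only point needing a little care, which I would flag as the main (minor) obstacle, is that the maps $c \mapsto \sum_i c_i u_i$ and $\lambda \mapsto \sum_j \lambda_j \beta_j$ need not be injective when the $u_i$ (resp.\ $\beta_j$) are merely generators rather than a basis; but this does not matter, since the statement only asserts that the listed elements \emph{generate} the intersection, and the image of a basis of a subspace under a linear map always generates the image of that subspace. (If one wants the stronger statement that the listed generators are themselves linearly independent, one would additionally need the $u_i$, resp.\ $\beta_j$, to be linearly independent, but that is not claimed here.) Everything else is the routine unwinding of definitions sketched above.
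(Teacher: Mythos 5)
Your proof is correct and is essentially the standard argument: the paper itself gives no proof of this lemma (it defers to the cited reference), and the intended reasoning is exactly your reduction of membership in $\Bo$ (resp.\ $U^{\perp}$) to the kernel of the evaluation matrix $\beta(u)$ (resp.\ its transpose), together with the observation that images of a basis of that kernel span the intersection even when the $u_i$ or $\beta_j$ are only generators. The only blemish is notational: with your convention $\beta(u)_{ji}=\beta_j(u_i)$ the matrix lies in $F^{n\times m}$ (as in the paper), not $F^{m\times n}$, but this does not affect the argument.
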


\begin{example} \label{ex:CC}
 We consider the boundary problem \eqref{eq:bpsingular}
 \begin{equation*}
  \bvp{u''(x) = f(x)}{u(1)=u'(1)=u'(0)=0,}
 \end{equation*}
which reads in operator notation 
as $(\partial^2, \Ll(\E_1, \E_1\partial, \E_0\partial))$. We employ the 
standard integro-differential algebra $C^{\infty}(\R)$ with the 
  usual derivation and integral operator $\cum \colon f \mapsto \cum_0^x f(\xi)\dx\xi$ that
  is implemented in the \maple\ package \pack\ as presented 
  in~\cite{KorporalRegensburgerRosenkranz2011,KorporalRegensburgerRosenkranz2012}.

 In \cite{KorporalRegensburgerRosenkranz2011}, we have already computed the compatibility conditions and
 the generalized Green's operator with respect to the exceptional space $E=\R$. 
 We again carry out the computations in the \pack\ package, but using 
the interface \op\ for input of integro-differential operators. There, we use the symbols $d$ and $a$ for input of the 
differential and integral operator, and $e(c)$ for the evaluation at $c \in \R$. 
For the \maple\ output, the respective capital letters $\D$, $\A$, and $\E[c]$ are used,
and the non-commutative multiplication of
integro-differential operators is denoted by ``$.$''. The constructors $\BP, \GBP, \BC$ and $\ES$ are used for input
of respectively boundary problems, generalized boundary problems, boundary conditions, and exceptional spaces.
 \begin{center}
\begin{minipage}{0.999\textwidth}
\begin{framed}~\\[-1cm]\small
\begin{alltt}
> with(IntDiffOp):
> with(IntDiffOperations):
> t1 := d^2:    #input differential operator
> b1 := e(1): b2 := e(1).d: b3 := e(0).d:    #boundary conditions	
> B1 := BC(b1, b2, b3):
> C1 := CompatibilityConditions(BP(t1, B1)); \negsskip
 \[\BC(\E[1].\A)\] \negskip
> E1 := ES(1):    #exceptional space
> bp1 := GBP(t1, B1, E1):    #generalized boundary problem
> g1 := GreensOperator(bp1);
\end{alltt} 
\begin{equation*}
 x.\A - \A.x + \bigl(-\frac{1}{2}x^2-\frac{1}{2}\bigr).\E_1. \A + \E_1 . \A. x
\end{equation*} \negsskip
\end{framed}\normalsize
\end{minipage}
\end{center}

\end{example}

\section{Composition of Generalized Green's Operators} \label{sec:Composition}

In Section~\ref{sec:SC}, we have already recalled the multiplicative structure for boundary problems
introduced 
in~\cite{RosenkranzRegensburger2008a,RegensburgerRosenkranz2009}. 
In contrast to the regular case, where Green's operators
always satisfy the reverse order law, the situation is more involved for generalized Green's operators
since they are not right inverses of the differential operator.

\begin{proposition}
 \label{prop:GOI}
 Let $(T,\B, E)$ be regular with $T \colon V \to W$ and $G = (T, \B, E)^{-1}$ its generalized Green's operator. 
 Then $GTG=G$, that is, $G$ is an outer inverse of $T$.  
\end{proposition}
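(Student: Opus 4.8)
The plan is to unwind the definition of $G$ applied to an arbitrary forcing function and show that feeding $G$ its own output produces the same output. Fix $f \in W$ and set $u = Gf$. By definition of the generalized Green's operator, $u$ is the unique element of $\Bo$ with $Tu = Qf$, where $Q$ is the projector onto $T(\Bo)$ along $E$. The key observation is that $Tu = Qf$ already lies in $T(\Bo)$, so applying the projector $Q$ to it leaves it fixed: $Q(Tu) = Tu$. This is precisely the statement that $T(\Bo) = \img Q$, which holds by the definition of $Q$.

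Now I would compute $G(Tu)$, i.e., $GTGf$. By the defining property of $G$, the element $v := G(Tu)$ is the unique element of $\Bo$ satisfying $Tv = Q(Tu)$. But we just observed $Q(Tu) = Tu$, so $v$ is the unique element of $\Bo$ with $Tv = Tu$. Since $u$ itself lies in $\Bo$ and satisfies $Tu = Tu$, uniqueness forces $v = u$. Here uniqueness is guaranteed by semi-regularity: if $Tv = Tu$ with $v, u \in \Bo$, then $v - u \in \Ker T \cap \Bo = \{0\}$. Hence $GTGf = v = u = Gf$, and since $f$ was arbitrary, $GTG = G$.

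I expect no serious obstacle here; the argument is essentially definitional. The only point deserving care is the identity $Q(Tu) = Tu$, which rests on $u \in \Bo$ implying $Tu \in T(\Bo) = \img Q$, combined with the fact that a projector restricts to the identity on its image. The other subtlety is making sure the uniqueness clause is invoked with the right hypotheses — both candidate preimages must lie in $\Bo$ and map under $T$ to the same element of $T(\Bo)$ — which is exactly what semi-regularity (the condition $\Ker T \cap \Bo = \{0\}$) provides. No computation with explicit integro-differential operators is needed.
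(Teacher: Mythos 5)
Your proof is correct and follows essentially the same route as the paper's: both unwind the definition to get $TGf = Qf$ and then use idempotence of $Q$ together with uniqueness of the solution in $\Bo$ (you invoke $\Ker T \cap \Bo = \{0\}$ explicitly, the paper packages it as $GQf = Gf$). Your version just spells out the uniqueness step in more detail; no gap.
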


\begin{proof}
  By definition of generalized Green's operators, we  have $Tu=Qf$ for all $f \in W$, 
  as well as $Gf=GQf = u$, where $Q$ denotes the projector
 onto $T(\Bo)$ along $E$. Hence $TGf = Tu = Qf$, and $GTGf = GQf = Gf$ for all $f \in W$.
\end{proof}

In terms of generalized inverses, the Green's operator of a regular problem $(T, \B, E)$ can therefore also be defined as the unique outer inverse $G$ of $T$ with $\img G = \Bo$ and $\Ker G = E$.
In particular, for an outer inverse $G$ of $T$, the boundary problem $(T, (\img G)^{\perp}, \Ker G)$  
is regular, and $G$ is its generalized Green's operator; see also \cite[Rmk.~4.7]{Korporal2012}.

The composition $G_2G_1$ of two outer inverses of $T_1$ and $T_2$ is in general not an
outer inverse of the product $T_1T_2$. However, from the above considerations it is clear that if 
$G_2G_1$ is an outer inverse of $T_1T_2$,
then computing its kernel and image yields the boundary problem it solves. For a proof of the 
following result, see \cite[Thm.~6.2]{KorporalRegensburger2013} or~\cite[Thm.~3.27]{Korporal2012}.

\begin{theorem} \label{prop:ROL}
 Let $(T_1, \B_1, E_1)$ and $(T_2, \B_2, E_2)$ be regular with  $T_1 \colon V \to W$, $T_2 \colon U\to V$ 
 and $G_1 = (T_1, \B_1, E_1)^{-1}$, $G_2 = (T_2, \B_2, E_2)^{-1}$ their generalized Green's operators.
 If $G_2G_1$ is an outer inverse of $T_1T_2$, the boundary problem
 \begin{equation}\label{eq:Comp1}
  \bigl(T_1T_2, \B_2 + T_2^*(\B_1 \cap E_2^{\perp}), E_1  + T_1(\Bo_1 \cap E_2) \bigr)
 \end{equation}
 is regular with generalized Green's operator $G_2G_1$. 
 Furthermore, the two sums $\B_2 \dotplus T_2^*(\B_1 \cap E_2^{\perp})$ and $E_1  \dotplus T_1(\Bo_1 \cap E_2)$
 are direct.
\end{theorem}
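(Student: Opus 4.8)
\noindent\emph{Proof proposal.}
The plan is to reduce everything to the characterization of generalized Green's operators recalled right after Proposition~\ref{prop:GOI}: a triple $(T,\B,E)$ is regular with generalized Green's operator $G$ precisely when $G$ is an outer inverse of $T$ with $\img G = \Bo$ and $\Ker G = E$, and conversely, for an outer inverse $G$ of $T$ the triple $(T,(\img G)^{\perp},\Ker G)$ is such a regular problem. Since $G_2G_1$ is assumed to be an outer inverse of $T_1T_2$, it therefore suffices to compute $\img(G_2G_1)$ and $\Ker(G_2G_1)$ and to recognize them as the annihilator of $\B_2 + T_2^*(\B_1 \cap E_2^{\perp})$ and as $E_1 + T_1(\Bo_1 \cap E_2)$, respectively. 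As both the boundary-condition space and the exceptional space in \eqref{eq:Comp1} are finite-dimensional, passing to double annihilators then identifies \eqref{eq:Comp1} with $(T_1T_2, (\img G_2G_1)^{\perp}, \Ker G_2G_1)$, which by the above is regular with generalized Green's operator $G_2G_1$.

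Before computing I would record two elementary facts about a generalized Green's operator $G_i$ of $(T_i,\B_i,E_i)$: from $G_iT_iG_i = G_i$ (Proposition~\ref{prop:GOI}), $G_iT_i$ is the identity on $\img G_i = \Bo_i$; and $T_iG_i = Q_i$ is the projector onto $T_i(\Bo_i)$ along $E_i$, with $\Ker G_i = E_i$. A short argument using these shows $G_1^{-1}(S) = E_1 \dotplus T_1(S)$ for every subspace $S \subseteq \Bo_1$, the sum being direct because $G_1$ maps $T_1(S)$ back bijectively onto $S$ but annihilates $E_1$. Applying this with $S = \Ker G_2 \cap \img G_1 = E_2 \cap \Bo_1$, and noting that $\Ker(G_2G_1) = G_1^{-1}(E_2 \cap \Bo_1)$ since $G_1f$ always lies in $\Bo_1$, yields $\Ker(G_2G_1) = E_1 \dotplus T_1(\Bo_1 \cap E_2)$; this gives both the exceptional space in \eqref{eq:Comp1} and the directness of its defining sum.

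For the image, $\img(G_2G_1) = G_2(\img G_1) = G_2(\Bo_1)$, and the two facts above let me verify both inclusions of $G_2(\Bo_1) = \Bo_2 \cap T_2^{-1}(\Bo_1 + E_2)$ (if $u=G_2v$ with $v\in\Bo_1$ then $T_2u = Q_2v \in v + E_2$; conversely if $u\in\Bo_2$ and $T_2u = v+e$ with $v\in\Bo_1$, $e\in E_2$, then $u = G_2T_2u = G_2v$). On the other side, unwinding the definitions of $T_2^*$ and of the annihilator gives $(\B_2 + T_2^*(\B_1 \cap E_2^{\perp}))^{\perp} = \Bo_2 \cap T_2^{-1}\bigl((\B_1 \cap E_2^{\perp})^{\perp}\bigr)$, so matching the two sides requires the identity $(\B_1 \cap E_2^{\perp})^{\perp} = \Bo_1 + E_2$. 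This is the step I expect to be the main obstacle: it is exactly where the finite dimension of $\B_1$ and of $E_2$ is genuinely used, and I would prove it by the same evaluation-matrix bookkeeping as in Lemma~\ref{lem:intersections} — realizing $\B_1 \cap E_2^{\perp}$ as a left kernel of the matrix $\beta_1(e_2)$ and dualizing, so that its annihilator comes out as $\Bo_1 + E_2$.

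Finally, the directness of $\B_2 \dotplus T_2^*(\B_1 \cap E_2^{\perp})$ is obtained by applying the transpose $G_2^*$: it kills $\B_2$, because $\B_2$ annihilates $\img G_2$, whereas $G_2^*T_2^* = (T_2G_2)^* = Q_2^*$ acts as the identity on $E_2^{\perp}$, since $Q_2$ differs from the identity only by a map into $E_2$. Hence an element common to the two summands is sent both to $0$ and to itself, so it vanishes. Collecting the image and kernel computations, the two directness statements, and the outer-inverse hypothesis, the characterization quoted at the start then yields the claim. \qed
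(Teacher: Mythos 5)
Your proposal is correct: computing $\Ker(G_2G_1)=G_1^{-1}(E_2\cap\Bo_1)=E_1\dotplus T_1(\Bo_1\cap E_2)$ and $\img(G_2G_1)=G_2(\Bo_1)=\Bo_2\cap T_2^{-1}(\Bo_1+E_2)$, matching the latter with $\bigl(\B_2+T_2^*(\B_1\cap E_2^{\perp})\bigr)^{\perp}$ via the (true, finite-dimensionality-based) identity $(\B_1\cap E_2^{\perp})^{\perp}=\Bo_1+E_2$, and then invoking the stated characterization of outer inverses as generalized Green's operators of $(T,(\img G)^{\perp},\Ker G)$ is exactly the line of argument the paper delegates to \cite[Thm.~6.2]{KorporalRegensburger2013} and \cite[Thm.~3.27]{Korporal2012}, and all steps (including both directness claims, where only the phrase ``sent to itself'' should read ``$G_2^*$ sends $T_2^*\gamma$ back to $\gamma$'') check out.
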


Based on Equation~\eqref{eq:Comp1}, we define the composition of two arbitrary boundary problems as follows. 
\begin{definition}
The \emph{composition} of two boundary problems is defined as
\begin{equation}
\label{eq:compogen}
 (T_1, \B_1, E_1) \circ (T_2,  \B_2, E_2) \\
 = \bigl(T_1T_2, \B_2 + T_2^*(\B_1 \cap E_2^{\perp}), E_1  + T_1(\Bo_1 \cap E_2) \bigr),
\end{equation}
assuming that the composition $T_1T_2$ is defined. 
\end{definition}
This definition clearly reduces to the composition of regular boundary problems~\eqref{eq:compo} when $E_1=E_2=\{0\}$.  
The composition of generalized boundary problems is implemented in the following algorithm.

\begin{framed}~\\[-1cm]
\begin{alg}[Composition]~ \label{alg:Comp}
 \begin{description}
 \item[Input] Two  boundary problems $(T_1, \B_1, E_1)$ and $(T_2, \B_2, E_2)$, \\
 $\beta_1, \ldots, \beta_n$ and $\tilde{\beta}_1, \ldots, \tilde{\beta_{\nu}}$ bases of $\B_1$ and $\B_2$, \\ 
 $e_1, \ldots, e_t$ and $\tilde{e}_1, \ldots,  \tilde{e}_{\tau}$ bases of $E_1$ and $E_2$. 
 \item[Output] The composite boundary problem $(T_1, \B_1, E_1)\circ (T_2, \B_2, E_2)$.
\end{description}
\begin{enumerate}
 \item Multiply $T=T_1T_2 \in \galg_{\Phi}\langle \partial, \cum \rangle$.
 \item Compute a basis $\gamma_1, \ldots, \gamma_k$ of $\B_1 \cap E_2^{\perp}$ using Lemma~\ref{lem:intersections}.
 \item Compute a basis $v_1, \ldots, v_{\ell}$ of $\Bo_1 \cap E_2$ using Lemma~\ref{lem:intersections}.
 \item For $1\leq i \leq k$ multiply $\delta_i = \gamma_i T_2 \in \galg_{\Phi}\langle \partial, \cum \rangle$.
 \item For $1 \leq j \leq \ell$ compute $t_j = T_1(v_j) $. 
 \item Compute a basis $\alpha_1, \ldots, \alpha_q$ of $\Ll(\tilde{\beta}_1, \ldots, \tilde{\beta_\nu}, \delta_1, \ldots, \delta_k)$.
 \item Compute a basis $f_1, \ldots, f_r$ of $\Ll(e_1, \ldots, e_t, t_1, \ldots, t_{\ell})$.
 \item Return $(T, (\alpha_1, \ldots, \alpha_q), (f_1, \ldots, f_r))$.
\end{enumerate}
\end{alg}~\\[-1.2cm]
\end{framed}

We are especially interested in the situation of Theorem~\ref{prop:ROL}, that is, the case where the 
composition of boundary problems corresponds to the composition of their generalized Green's operators.
For testing when $G_2G_1$ is an outer inverse of $T_1T_2$, we use the following characterization from~\cite{KorporalRegensburger2013,Korporal2012}, which is based on  results from~\cite{GrossTrenkler1998} and \cite{Werner1992}. It gives necessary and sufficient 
conditions on the subspaces $\Bo_1$, $T_2(\Bo_2)$, $E_2$,  and $T_1^{-1}(E_1)$  
such that the revers order law  
 \begin{equation}
  \label{eq:ROLgen}
   ((T_1, \B_1, E_1)\circ (T_2, \B_2, E_2))^{-1}=(T_2, \B_2, E_2)^{-1} (T_1, \B_1, E_1)^{-1}
 \end{equation}
for the respective generalized Green's operators holds.  
The conditions can be checked using only the input data, in particular, without computing $G_2$ or $G_1$.

\begin{theorem} \label{thm:G2G1OI}
 Let $(T_1, \B_1, E_1)$ and $(T_2, \B_2, E_2)$ be regular with  $T_1 \colon V \to W$, $T_2 \colon U\to V$ 
 and $G_1 = (T_1, \B_1, E_1)^{-1}$, $G_2 = (T_2, \B_2, E_2)^{-1}$ their generalized Green's operators. Let 
 $\C_2 = T_2(\Bo_2)^{\perp}$ and $K_1 = T_1^{-1}(E_1)$.
The following conditions are equivalent:
\begin{enumerate} [label=(\roman*)]
 \item $G_2G_1$ is an outer inverse of $T_1 T_2$,
 \item $(T_1, \B_1, E_1) \circ (T_2, \B_2, E_2)$ is regular with Green's operator $G_2G_1$,		 	\label{algOI1}	
 \item $\C_2 + (\B_1 \cap E_2^{\perp}) \geq \B_1 \cap (E_2 \cap K_1)^{\perp}$, \label{algOI2}	
 \item $\B_1 \geq \C_2 \cap (E_2 \cap \Bo_1)^{\perp} \cap (E_2 \cap  K_1)^{\perp}$, \label{algOI3}	
 \item $ K_1  \dotplus (E_2 \cap \Bo_1) \geq E_2 \cap (\B_1 \cap \C_2)^{\perp}$, \label{algOI4}	
 \item $E_2 \geq  K_1 \cap (\B_1 \cap E_2^{\perp})^{\perp} \cap (\B_1 \cap \C_2)^{\perp}$. \label{algOI5}	
\end{enumerate}
\end{theorem}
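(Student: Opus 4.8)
The plan is to prove (i)$\Leftrightarrow$(ii) directly from the results already recalled, and then to show that (i) and each of (iii)--(vi) are equivalent to a single auxiliary inclusion, which I will call $(\star)$, among subspaces of $V$. For (i)$\Rightarrow$(ii): if $G_2G_1$ is an outer inverse of $T_1T_2$, then Theorem~\ref{prop:ROL} yields precisely~(ii). For (ii)$\Rightarrow$(i): if $(T_1,\B_1,E_1)\circ(T_2,\B_2,E_2)$ is regular with generalized Green's operator $G_2G_1$, then $G_2G_1$ is a generalized Green's operator and hence, by Proposition~\ref{prop:GOI}, an outer inverse of $T_1T_2$. Everything else is finite-dimensional linear algebra on subspaces of $V$ and $V^*$.

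For the core I would work with the idempotents $P := G_1T_1$ and $Q := T_2G_2$ on $V$. Using $\img G_i = \Boi{i}$, $\Ker G_i = E_i$, $\Boi{i}\cap\Ker T_i = \{0\}$ and Proposition~\ref{prop:GOI}, one checks that $P$ is the projector onto $\Boi{1}$ along $K_1$ (so $V = \Boi{1}\dotplus K_1$) and $Q$ is the projector onto $T_2(\Boi{2})$ along $E_2$ (so $V = T_2(\Boi{2})\dotplus E_2$). Then $G_2G_1\,T_1T_2\,G_2G_1 = G_2\,P\,Q\,G_1$, and since both $G_2PQG_1$ and $G_2G_1$ factor through the surjection $G_1\colon W\to\Boi{1}$ while $\Ker G_2 = E_2$, they coincide if and only if $PQg - g\in E_2$ for every $g\in\Boi{1}$. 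As $Pg = g$ on $\Boi{1}$, this reads $P\bigl((I-Q)g\bigr)\in E_2$ for all $g\in\Boi{1}$; since $I-Q$ is the projector onto $E_2$ along $T_2(\Boi{2})$, the set $\{(I-Q)g : g\in\Boi{1}\}$ equals $(\Boi{1}+T_2(\Boi{2}))\cap E_2$, so (i) is equivalent to the statement that $P$ maps $(\Boi{1}+T_2(\Boi{2}))\cap E_2$ into $E_2$. A short computation using $\Boi{1}\cap K_1 = \{0\}$ then rewrites this as the inclusion $(\star)$, namely $T_2(\Boi{2})\cap(\Boi{1}+E_2) \subseteq \Boi{1}+(E_2\cap K_1)$ (equivalently $(\Boi{1}+E_2)\cap(\Boi{1}+T_2(\Boi{2}))\cap K_1\subseteq E_2$).

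Finally I would obtain (iii)--(vi) from $(\star)$ by taking annihilators in $V^*$ and using $\C_2 = T_2(\Boi{2})^{\perp}$, $\B_1 = (\Boi{1})^{\perp}$, the identities $(S+T)^{\perp} = S^{\perp}\cap T^{\perp}$ and $(S\cap T)^{\perp} = S^{\perp}+T^{\perp}$ (valid here because every subspace occurring is finite-dimensional or of finite codimension), $(X^{\perp})^{\perp} = X$ for such subspaces, and the modular law. For instance $(\star)$ says $[T_2(\Boi{2})\cap(\Boi{1}+E_2)]^{\perp}\geq[\Boi{1}+(E_2\cap K_1)]^{\perp}$, which unpacks to $\C_2+(\B_1\cap E_2^{\perp})\geq\B_1\cap(E_2\cap K_1)^{\perp}$, i.e.\ (iii); conditions (v) and (vi) are the annihilator duals of the two stated forms of $(\star)$, and (iv) follows by one more rearrangement via the modular law. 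Alternatively, one may invoke the abstract reverse-order-law criterion for outer inverses from~\cite{KorporalRegensburger2013,Korporal2012} (itself based on~\cite{GrossTrenkler1998,Werner1992}), specialised by substituting $\img G_i$, $\Ker G_i$, $T_2(\img G_2)$ and $T_1^{-1}(\Ker G_1)$ for the abstract subspaces.

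The main obstacle is the subspace bookkeeping in these last two steps: carrying out the short argument that turns ``$P$ maps $(\Boi{1}+T_2(\Boi{2}))\cap E_2$ into $E_2$'' into $(\star)$, and checking that the four superficially different inclusions (iii)--(vi) really all collapse to $(\star)$, while keeping careful track of which of the annihilator identities rely on the finite-(co)dimensionality assumptions available here.
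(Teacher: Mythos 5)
Your proposal is sound, but note that the paper itself contains no proof of Theorem~\ref{thm:G2G1OI}: it is imported from \cite{KorporalRegensburger2013,Korporal2012}, which in turn build on the projector results of \cite{GrossTrenkler1998,Werner1992}. Your argument is in effect a self-contained specialization of that machinery, and its key steps check out: $(i)\Leftrightarrow(ii)$ does follow from Theorem~\ref{prop:ROL} together with Proposition~\ref{prop:GOI}; $P=G_1T_1$ is the projector onto $\Boi{1}$ along $K_1$ (so $V=\Boi{1}\dotplus K_1$) and $Q=T_2G_2$ is the projector onto $T_2(\Boi{2})$ along $E_2$; and the outer-inverse identity $G_2PQG_1=G_2G_1$ reduces, exactly as you say, to $P\bigl((\Boi{1}+T_2(\Boi{2}))\cap E_2\bigr)\subseteq E_2$, i.e.\ to $(\Boi{1}+T_2(\Boi{2}))\cap E_2\subseteq (\Boi{1}\cap E_2)+K_1$. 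It is worth observing that this intermediate inclusion is already condition (v) verbatim, since $(\B_1\cap\C_2)^{\perp}=\Boi{1}+T_2(\Boi{2})$; your $(\star)$ is the dual of (iii), your second form of $(\star)$ is (vi), and \eqref{eq:OI3} is the dual of (iv), so the whole theorem reduces to the mutual equivalence of four inclusions among $\Boi{1},T_2(\Boi{2}),E_2,K_1$ in $V$. Two cautions on the parts you left as ``short computations''. First, these equivalences are \emph{not} consequences of annihilator identities and the modular law alone: they genuinely use the two direct sums $V=\Boi{1}\dotplus K_1$ and $V=T_2(\Boi{2})\dotplus E_2$ (for instance, passing from the dual of (iv) to the dual of (iii) uses $T_2(\Boi{2})\cap E_2=\{0\}$, and passing from (v) to (vi) uses $\Boi{1}\cap K_1=\{0\}$); with these decompositions in hand, however, each implication is indeed a two-line element chase, so your claims hold. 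Second, the dualization back and forth between $V$ and $V^*$ is legitimate here because $\B_1$, $\C_2$, $E_2$, $E_2\cap K_1$ and $\B_1\cap\C_2$ are finite dimensional, hence stable under double annihilators, as you indicate. What your route buys is a proof that stays entirely inside the boundary-problem setting of this paper; what the cited route buys is the general reverse-order-law criterion for outer inverses of arbitrary linear maps, of which this theorem is the instance obtained by the substitution you describe at the end.
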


All conditions are formulated so that they can be checked algorithmically with the methods from the end of 
Section~\ref{sec:SC}.
For non-algorithmic purposes, also the equivalent dual conditions are useful, as for example
\begin{equation} \label{eq:OI3}
 \Bo_1 \leq \Co_2 \dotplus (E_2 \cap \Bo_1) \dotplus (E_2 \cap K_1),
 \end{equation}
which is the dual statement of Condition~\ref{algOI3}.

For testing the conditions of Theorem \ref{thm:G2G1OI}, it is necessary to determine 
the space $K_1=T_1^{-1}(E_1)$. This can be done
using the identity 
\begin{equation} \label{eq:EasyKerComp} 
 T^{-1}(E) = G(E) \dotplus \Ker T
 \end{equation}
for the inverse image of a subspace, where $G$ is an arbitrary right inverse  of $T$. It can be verified directly, 
see also~\cite[Prop.~A.12]{RegensburgerRosenkranz2009}.  As always, we assume a
regular fundamental system to be given, which allows to construct the fundamental right inverse. Since any right inverse of $T$ is injective and because of the direct sum in \eqref{eq:EasyKerComp}, the output of the following algorithm is indeed a basis of $T^{-1}(E)$.

\begin{framed}~\\[-1cm]
\begin{alg}[Inverse Image]~ \label{alg:preimage}
 \begin{description}
 \item[Input] A monic differential operator $T\colon V \to W$ and a basis $e_1, \ldots, e_r$ of $E \leq W$.
 A regular fundamental system $s_1, \ldots, s_m$ of $T$.
 \item[Output] A basis of $T^{-1}(E)$.
\end{description}
\begin{enumerate}
 \item Compute the fundamental right inverse $\fri{T}$ according to \eqref{eq:VarConst}.
 \item For $1 \leq i \leq r$ compute $k_i = \fri{T} (e_i)$. 
 \item Return $s_1, \ldots, s_m, k_1, \ldots, k_r$.
\end{enumerate}
\end{alg}~\\[-1.2cm]
\end{framed}

For testing the necessary and sufficient conditions of Theorem \ref{thm:G2G1OI}, we assume that for 
finite-dimensional subspaces of $V$ or $V^*$, we can compute sums and intersections and 
check inclusions. For the heuristic approach used in the \pack\ package, see~\cite[Sec.~7.4]{Korporal2012}. 

The following algorithm implements a test of Condition~\ref{algOI2}, the others can be
implemented similarly. (None of them seems to be particularly preferable from computational aspects.)

\begin{framed}~\\[-1cm]
\begin{alg}[Check Reverse Order Law]~ \label{alg:checkROL}
 \begin{description}
 \item[Input] Two  boundary problems $(T_1, \B_1, E_1)$ and $(T_2, \B_2, E_2)$, \\
 $\beta_1, \ldots, \beta_n$ and $\tilde{\beta}_1, \ldots, \tilde{\beta_{\nu}}$ bases of $\B_1$ and $\B_2$, \\ 
 $e_1, \ldots, e_t$ and $\tilde{e}_1, \ldots,  \tilde{e}_{\tau}$ bases of $E_1$ and $E_2$. \\
 Regular fundamental systems $s_1, \ldots, s_m$ of $T_1$ and $\tilde{s}_1, \ldots, \tilde{s}_{\ell}$ of $T_2$.
 \item[Output] \textbf{true} if the reverse order law holds and \textbf{false} otherwise.
\end{description}
\begin{enumerate}
 \item Compute a basis of $T_1^{-1}(E_1)$ with Algorithm~\ref{alg:preimage}.
 \item Compute a basis of $J = E_2 \cap T_1^{-1}(E_1)$.
 \item Compute a basis of $B = \B_1 \cap J^{\perp}$ using Lemma~\ref{lem:intersections}.
 \item Compute a basis of $K = \B_1 \cap E_2^{\perp}$ using Lemma~\ref{lem:intersections}.
 \item Compute the compatibility conditions $\gamma_1, \ldots, \gamma_r$ of $(T_2, \B_2)$ due to \eqref{eq:CompCond}.
 \item Compute $C = \Ll(\gamma_1, \ldots, \gamma_r) + K$.
 \item If $B\leq C$ return \textbf{true}, else return \textbf{false}.
\end{enumerate}
\end{alg}~\\[-1.2cm]
\end{framed}

\begin{example} \label{ex:GIpaper}
As a first example, we consider the generalized boundary problem 
 $(\D^2, \Ll(\E_1, \E_1\D, \E_0\D), \R)$
 from Example \ref{ex:CC}. As a second boundary problem, we consider the differential operator 
$T_2 = \D^2 - 1$, also with the boundary conditions $\Ll(\E_1, \E_1\D, \E_0\D)$, or, in usual notation
\begin{equation*}
 \bvp{u''(x) - u(x) = f(x)}{u(1)=u'(1)=u'(0)=0.}
\end{equation*}
As we will see below, the forcing function $f$ has to satisfy the  compatibility condition 
\begin{equation*}
 \textstyle{\int_0^1} \exp(-\xi)f(\xi)\dx\xi + \textstyle{\int_0^1} \exp(\xi)f(\xi)\dx\xi =0,
\end{equation*}
 so that $E_2= \Ll(x)$ is
a suitable exceptional space for the second boundary problem.
Moreover, we compute the inverse image $K_1=T_1^{-1}(E_1)$ for testing the 
conditions of Theorem~\ref{thm:G2G1OI}.
\begin{center}
\begin{minipage}{0.999\textwidth}
\begin{framed}~\\[-1cm]\small
\begin{alltt}
> t2 := d^2-1: B2 := B1:
> C2 := CompatibilityConditions(BP(t2, B2))); \negsskip
 \[ \BC(\E[1].\A.\exp(-x)+\E[1].\A.\exp(x))  \] \negskip
> E2 := ES(x): bp2 := GBP(t2, B2, E2):
> IsRegular(GBP(t2, B2, E2)); \negskip
\[true\] \negskip 
> K1 := InverseImage(t1, E1);
\end{alltt} \negskip
\begin{equation*} 
 \ES(1, x, x^2)
\end{equation*} \negskip
\end{framed}\normalsize
\end{minipage}
\end{center}

In this case $E_2 =\Ll(x) \leq \Ll(1,x,x^2) =K_1$, so Condition (iv) of Theorem~\ref{thm:G2G1OI}
is trivially satisfied. We check that the reverse order law holds for the respective Green's operators.

\begin{center}
\begin{minipage}{0.999\textwidth}
\begin{framed}~\\[-1cm]\small
\begin{alltt} \negsskip
 > p1 := MultiplyBoundaryProblem(bp1, bp2);
\end{alltt} \scriptsize
\begin{equation*}
 \GBP(\D^4-\D^2, \BC(\E[0].\D, \E[0].\D^3-\E[1].\D^3, \E[1], \E[1].\D, \E[1].\D^2-\E[1].\D^3), \ES(1))
\end{equation*} \small
\begin{alltt}
 > g := GreensOperator(p1):
 > g2 := GreensOperator(bp2):
 > IsZero(g-g2.g1);    #check reverse order law \negskip
 \[true\] \negskip
 \end{alltt} \negskip \negskip \negskip
\end{framed}\normalsize
\end{minipage}
\end{center}

Since for the boundary problems in the previous example, we have $T_1T_2 = T_2T_1 = \D^4 -\D^2$, we can also
consider the product of Green's operators in reverse order, that is, test if $(T_1, \B_1, E_1)^{-1}(T_2, \B_2, E_2)^{-1}$
is an outer inverse of $T_1T_2$. 

\begin{example} \label{ex:GI2}
We follow the steps of Algorithm \ref{alg:checkROL}, interchanging the indices accordingly. 
Recall from Example~\ref{ex:CC} that $\C_1$ is generated by $\E_1\A$ and
that we have chosen the exceptional space $E_1 = \R$. 

 \begin{center}
\begin{minipage}{0.999\textwidth}
\begin{framed}~\\[-1cm]\small
\begin{alltt}
 > K2 := InverseImage(t2, E2);
\end{alltt}
\begin{equation*} 
 \ES(x, \exp(x), \exp(-x))
\end{equation*}
\begin{alltt}
 > J := Intersection(E1, K2); \negsskip
 \[ \ES() \] \negsskip
 > B := Intersection(B2, J, space = dual);
 \[ \BC(\E[1], \E[1].\D, \E[0].\D)  \]\negsskip
 > K := Intersection(B2, E1, space = dual);
 \[ \BC(\E[1].\D, \E[0].\D)  \]\negsskip
 > TestComposition(bp2, bp1);    #check reverse order law \negsskip
 \[false\]\negskip \negskip 
\end{alltt}
\end{framed}\normalsize
\end{minipage}
\end{center}
\end{example}
We see that in the above example Algorithm \ref{alg:checkROL} returns \emph{false} since
the inclusion 
\begin{equation*} 
 \C_1 + (\B_2 \cap E_1^{\perp}) \geq \B_2 \cap (E_1 \cap K_2)^{\perp}
\end{equation*}
from Theorem \ref{thm:G2G1OI} (ii) is not satisfied: the left-hand side does not contain
the evaluation $\E_1$.
We also verify directly that the product $G_1G_2$ indeed is not an outer inverse of $T_1T_2$.

\begin{center}
\begin{minipage}{0.999\textwidth}
\begin{framed}~\\[-1cm]\small
\begin{alltt} \negsskip
 > t := t2.t1: g:= g1.g2:
 > IsZero((g.t).g-g); \negskip
  \[false\] \negskip
 > p2 := MultiplyBoundaryProblem(bp2, bp1);
\end{alltt} \scriptsize
\begin{equation*}
 \GBP(\D^4-\D^2, \BC(\E[0].\D, \E[0].\D^3, \E[1], \E[1].\D, \E[1].\D^3), \ES(x))
\end{equation*} \small
 \begin{alltt}
 > IsRegular(p2); \negskip
 \[true\]\negskip \negskip
\end{alltt} 
\end{framed}\normalsize
\end{minipage}
\end{center}

Although $G_1G_2$ is not an outer inverse of $T_2T_1$, in this case the composite boundary problem
$(T_2, \B_2, E_2) \circ (T_1, \B_1, E_1)$ is regular. In general, we do not even obtain semi-regularity
of $(T_1T_2, \B_2 + T_2^*(\B_1 \cap E_2^{\perp}))$, see~\cite[Thm.~4.21]{Korporal2012} for more details.
\end{example}

\section{Factorization of Generalized Boundary Problems} \label{sec:Factorization}

In this section, we discuss certain classes of factorizations of generalized boundary problems.
We start with a regular boundary problem $(T, \B, E)$ and a factorization $T= T_1T_2$ of the underlying differential operator.  
The factorization algorithm presented in this section works for arbitrary factorizations of $T$, regardless of the 
coefficient domain, as long as a regular fundamental system of $T_2$ is known.
In our package, we use the function \texttt{DFactor} from the \maple\ \texttt{DETools} package to factor differential
 operators with rational coefficients (Example \ref{ex:RatCoeff}), unless a particular factorization is specified in the input.

As in \cite{RegensburgerRosenkranz2009} for regular boundary problems, the overall goal would be to characterize all regular boundary problems $(T_1, \B_1, E_1)$ and $(T_2, \B_2, E_2)$ such that
\begin{equation*}
 (T, \B, E) = (T_1, \B_1, E_1) \circ (T_2, \B_2, E_2).
\end{equation*}
For generalized boundary problems, we additionally have to require that the reverse order law
\begin{equation*}
 ((T_1, \B_1, E_1) \circ (T_2, \B_2, E_2))^{-1} = (T_2, \B_2, E_2)^{-1} (T_1, \B_1, E_1)^{-1}
\end{equation*}
holds, which is always valid for regular boundary problems~\eqref{eq:ROL}. Due to the structure of  the composite~\eqref{eq:compogen}, where information about $\B_1$ gets lost when intersecting
with $E_2$, and the rather involved interactions of the subspaces $\Bo_1, K_1, \Co_2$ and $E_2$ in Theorem~\ref{thm:G2G1OI}, it is a difficult task to describe \emph{all} possible factorizations. 

In the following, we discuss a special case of factoring a regular boundary problem $(T, \B, E)$:  
As a right factor, we strive for a regular problem $(T_2, \B_2)$, meaning that $E_2=\{0\}$. Then 
the reverse order law is trivially satisfied, 
and the composite takes the easier form
\begin{equation*}
 (T_1, \B_1, E_1) \circ (T_2, \B_2) = (T_1T_2, \B_2 \dotplus T_2^*(\B_1), E_1).
\end{equation*}
The existence of such a factorization was proven in~\cite{Korporal2012}. It relies on the fact that
for a semi-regular boundary problem $(T, \B)$ and a factorization $T=T_1T_2$, there always exists
$\B_2 \leq \B$ such that $(T_2, \B)$ is regular. 
(For ordinary differential equations, this can be seen immediately by inspecting the evaluation matrix.)

\begin{theorem} \label{thm:FactReg}
 Let $(T, \B)$ be a semi-regular boundary problem and let $T= T_1T_2$ be a factorization of $T$ into 
 monic differential operators. 
   For each exceptional space $E$ of $(T, \B)$ there exists a unique regular boundary problem
 $(T_1, \B_1, E)$ such that for each $\B_2 \leq \B$ for which $(T_2, \B_2)$ is regular, we have
 \begin{equation} \label{eq:RegLeft}
  (T, \B, E) = (T_1, \B_1, E) \circ (T_2, \B_2) = (T_1T_2, \B_2 \dotplus T_2^*(\B_1), E)
 \end{equation}
 and $(T, \B, E)^{-1} =  (T_2, \B_2)^{-1} (T_1, \B_1, E)^{-1}$. The boundary conditions of the left 
 factor are given by $\B_1 = H_2^*(\B \cap (\Ker T_2)^{\perp})$, where $H_2$ is
 an arbitrary right inverse of $T_2$.

\end{theorem}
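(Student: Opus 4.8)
The plan is to verify directly that the triple $\B_1 = H_2^*(\B \cap (\Ker T_2)^{\perp})$, together with the given exceptional space $E$, makes $(T_1, \B_1, E)$ regular and that composing it with any admissible $(T_2, \B_2)$ recovers $(T, \B, E)$. First I would check well-definedness of $\B_1$: since $(T, \B)$ is semi-regular, $\Ker T \cap \Bo = \{0\}$, and because $\Ker T_2 \leq \Ker T$, the boundary conditions in $\B$ restricted modulo $(\Ker T_2)^{\perp}$ still separate $\Ker T_1$ after transport by $H_2^*$; one shows $\dim \B_1 = \dim \B - \dim \Ker T_2 = \ord T_1$ using Lemma~\ref{lem:intersections} (the intersection $\B \cap (\Ker T_2)^{\perp}$ has dimension $\dim\B - \operatorname{rk}\beta(s)$ where $s$ is a fundamental system of $T_2$, and $H_2^*$ is injective on this subspace because $H_2$ surjects onto $V$). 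I would also check that $\B_1$ is independent of the choice of right inverse $H_2$: two right inverses differ by a map into $\Ker T_2$, which is annihilated by any $\beta \in \B \cap (\Ker T_2)^{\perp}$.

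Next I would establish the composition identity $(T_1T_2, \B_2 \dotplus T_2^*(\B_1), E) = (T, \B, E)$. Since $T_1T_2 = T$ and the exceptional spaces agree, only the boundary conditions need matching: I must show $\B_2 \dotplus T_2^*(\B_1) = \B$. The inclusion $\supseteq$ splits into showing $\B_2 \leq \B$ (hypothesis) and $T_2^*(\B_1) = T_2^*H_2^*(\B \cap (\Ker T_2)^{\perp}) = (H_2T_2)^*(\B \cap (\Ker T_2)^{\perp}) \leq \B$; here the key computational point is that $H_2T_2$ is the projector onto $(\Ker T_2)^{\perp}$-complement — more precisely $H_2T_2 = 1 - P_2$ for the projector $P_2$ onto $\Ker T_2$ along $\img H_2$, so $(H_2T_2)^*\beta = \beta - \beta\circ P_2 = \beta$ whenever $\beta$ annihilates $\img P_2 = \Ker T_2$, i.e. whenever $\beta \in \B\cap(\Ker T_2)^{\perp}$. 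Thus $T_2^*(\B_1) = \B \cap (\Ker T_2)^{\perp}$. For the reverse inclusion and directness of the sum, a dimension count suffices: $\dim(\B_2 \dotplus T_2^*(\B_1)) = \ord T_2 + \ord T_1 = \ord T = \dim\B$ once I verify $\B_2 \cap (\B\cap(\Ker T_2)^{\perp}) = \{0\}$, which follows because $(T_2,\B_2)$ regular forces $\beta(s)$ restricted to $\B_2$ to be invertible, so no nonzero element of $\B_2$ can vanish on $\Ker T_2$.

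Regularity of $(T_1, \B_1, E)$ then follows from Theorem~\ref{prop:ROL} applied in reverse, or more cleanly: the composition identity shows $(T_1,\B_1,E)\circ(T_2,\B_2) = (T,\B,E)$ which is regular by hypothesis; combined with $(T_2,\B_2)$ regular (hence $E_2 = \{0\}$, so the reverse order law is automatic and $G_2G_1$ is an outer inverse of $T$), Theorem~\ref{prop:ROL} or its converse direction forces $(T_1,\B_1,E)$ to be regular with $G_2 G_1$ as its Green's operator, giving $(T,\B,E)^{-1} = (T_2,\B_2)^{-1}(T_1,\B_1,E)^{-1}$. Uniqueness of $(T_1,\B_1,E)$: if $(T_1,\B_1',E)$ also satisfies the identity for some valid $\B_2$, then $\B_2 \dotplus T_2^*(\B_1') = \B = \B_2 \dotplus T_2^*(\B_1)$ and injectivity of $T_2^*$ on the relevant complement of $\Ker T_2^*$ (equivalently, on any complement of $\img T_2$'s annihilator) forces $\B_1' = \B_1$.

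The main obstacle I anticipate is the careful bookkeeping in the dimension counts and the injectivity claims for $T_2^*$ and $H_2^*$ on the specific subspaces involved — these rely on the interplay between $\Ker T_2 \leq \Ker T$, the semi-regularity of $(T,\B)$, and the rank of the evaluation matrix $\beta(s)$. In particular, showing that $\B_1 = H_2^*(\B\cap(\Ker T_2)^{\perp})$ has exactly dimension $\ord T_1$ and that $(T_1,\B_1)$ is semi-regular (i.e. $\Ker T_1 \cap \B_1^{\perp} = \{0\}$) requires translating the semi-regularity of $(T,\B)$ through the right inverse $H_2$; this is where I would spend most of the effort, likely by choosing $H_2$ adapted to a splitting $V = \img H_2 \dotplus \Ker T_2$ and tracking how $\B$ decomposes.
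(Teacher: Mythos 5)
Your central computations are sound: for $\beta \in \B \cap (\Ker T_2)^{\perp}$ you correctly get $T_2^*H_2^*\beta = \beta$ (since $H_2T_2$ is a projector with image $\img H_2$ and kernel $\Ker T_2$), hence $T_2^*(\B_1) = \B \cap (\Ker T_2)^{\perp}$, independence of the choice of $H_2$, and $\B_2 \cap (\Ker T_2)^{\perp} = \{0\}$ from regularity of $(T_2,\B_2)$. But there are genuine gaps. First, your dimension bookkeeping assumes the regular case: for a generalized problem $\dim\B > \ord T$ in general, so $\dim\B_1 = \dim\B - \ord T_2$ is strictly larger than $\ord T_1$ (in Example~\ref{ex:fac1}, $\dim\B_1 = 3$ while $\ord T_1 = 2$), and your count ``$\ord T_1 + \ord T_2 = \ord T = \dim\B$'' is false; the count is repairable via $\dim(\B \cap (\Ker T_2)^{\perp}) = \dim\B - \ord T_2$ (the duality argument the paper uses for its algorithm), but as written it is wrong and signals that you are treating the left factor as a classically regular problem. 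Second, and more seriously, you never prove regularity of $(T_1,\B_1,E)$: appealing to ``Theorem~\ref{prop:ROL} applied in reverse'' is circular, since Theorem~\ref{prop:ROL} presupposes regularity of both factors (otherwise $G_1$ is not even defined), and no converse is available; in particular the directness $T_1(\Boi{1}) \cap E = \{0\}$ does not follow from regularity of the composite, because a priori $T_1(\Boi{1})$ could be strictly larger than $T(\Bo)$. The missing step -- exactly the one you defer to ``most of the effort'' -- is the identity $\Boi{1} = T_2(\Bo)$ for $\B_1 = H_2^*(\B \cap (\Ker T_2)^{\perp})$, which follows from $(\B \cap (\Ker T_2)^{\perp})^{\perp} = \Bo + \Ker T_2$: it gives $\Ker T_1 \cap \Boi{1} = \{0\}$ from semi-regularity of $(T,\B)$, and $T_1(\Boi{1}) = T(\Bo)$, whence $T_1(\Boi{1}) \dotplus E = W$. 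Only with regularity of both factors in hand can you use that $E_2=\{0\}$ trivializes condition (v) of Theorem~\ref{thm:G2G1OI} and conclude via Theorem~\ref{prop:ROL} that $G_2G_1 = (T,\B,E)^{-1}$.

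Your uniqueness argument also has a hole: from $\B_2 \dotplus T_2^*(\B_1') = \B = \B_2 \dotplus T_2^*(\B_1)$ for \emph{some} regular $\B_2$ you cannot conclude $T_2^*(\B_1') = T_2^*(\B_1)$ -- direct complements of $\B_2$ in $\B$ are far from unique -- and injectivity of $T_2^*$ only helps after that equality is known. Either exploit the quantifier ``for each $\B_2$'' (a subspace of dimension $\dim\B - \ord T_2$ that is a direct complement of \emph{every} regular $\B_2$, i.e.\ of every complement of $\B \cap (\Ker T_2)^{\perp}$, must equal $\B \cap (\Ker T_2)^{\perp}$), or, more cleanly, use the Green's operator identity: since $(T_2,\B_2)^{-1}$ is a true right inverse of $T_2$, the relation $(T,\B,E)^{-1} = G_2G_1$ forces $G_1 = T_2\,(T,\B,E)^{-1}$, which determines $G_1$ and hence the problem $(T_1, (\img G_1)^{\perp}, \Ker G_1) = (T_1,\B_1,E)$ uniquely (using that $\B_1$ is finite dimensional, hence orthogonally closed). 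Note that the paper itself does not prove this theorem but defers to~\cite{Korporal2012}; your outline is close in spirit to the paper's proof of correctness of Algorithm~\ref{alg:RightFact}, but the regularity of the left factor and the uniqueness are precisely the parts that still need the work indicated above.
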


\begin{example} \label{ex:fac1}
We consider the factorization of the boundary problem \texttt{p2} from Example~\ref{ex:GI2} with
$\D^4-\D^2 = (\D^2 -1)\cdot \D^2$:
\begin{multline*}
 (\D^4-\D^2, \Ll(\E_0\D, \E_0\D^3, \E_1, \E_1\D, \E_1\D^3), \Ll(x)) \\
 = (\D^2 -1, \Ll(\E_0\D, \E_1 \D, \E_1\A), \Ll(x)) \circ
 (\D^2, \Ll(\E_0\D, \E_1)).
\end{multline*}
The following \maple\ session shows that the reverse order law holds for the respective Green's operators.

\begin{center}
\begin{minipage}{0.999\textwidth}
\begin{framed}~\\[-1cm]\small
\begin{alltt} \negsskip
 > p2; \negsskip
\end{alltt}  \scriptsize 
\begin{equation*}
 \GBP(\D^4-\D^2, \BC(\E[0].\D, \E[0].\D^3, \E[1], \E[1].\D, \E[1].\D^3), \ES(x))
\end{equation*} \small
\begin{alltt}
 > g := GreensOperator(p2):
 > f1, f2 := FactorBoundaryProblem(p2, t2, t1); \negskip
 \end{alltt}\scriptsize
 \begin{equation*}
  \GBP(D^2-1, \BC(\E[0].\D, \E[1].\D, \E[1].\A), \ES(x)), \BP(\D^2, \BC(\E[0].\D, \E[1]))
 \end{equation*}\small
\begin{alltt}
 > g1 := GreensOperator(f1): g2:= GreensOperator(f2):
 > IsZero(g-g2.g1); \negskip
 \[true\]
\end{alltt} \negskip \negskip \negskip
\end{framed}\normalsize
\end{minipage}
\end{center}
\end{example}

We now show how to construct the above factorization algorithmically, 
generalizing the method presented in  \cite{RosenkranzRegensburgerTecBuchberger2009}.
We assume a given factorization $T= T_1T_2$ of the differential operator and a regular fundamental system of
$T_2$, which is needed in Step 4 to construct a right inverse of $T_2$.

\begin{framed}~\\[-1cm]
\begin{alg}[Right Regular Factorization]~ \label{alg:RightFact}
\begin{description}
 \item[Input] A regular boundary problem $(T, \B, E)$, \\
 $\beta_1, \ldots, \beta_{n}$ a basis of $\B$,  
 $e_1, \ldots, e_r$ a basis of $E$.  \\ A factorization $T = T_1T_2$,  a reg.~fundamental system $s_1, \ldots, s_{\mu}$ 
 of $T_2$. 
 \item[Output] Two regular boundary problems $(T_1, \B_1, E)$ and $(T_2, \B_2)$ with $(T, \B, E) = (T_1, \B_1, E) \circ (T_2, \B_2)$.
\end{description}
\begin{enumerate}
 \item Compute the evaluation matrix $M = \beta(s) \in F^{n \times \mu}$.
 \item Compute $S = (s_{i,j}) \in F^{n \times n}$ s.t. ~$SM$ is in reduced row echelon form. \label{step:LUDec}
 \item For $1 \leq i \leq n$ set $\tilde{\beta}_i = \sum_{k=1}^n s_{i,k} \beta_k$.
 \item Compute a right inverse $H_2$ of $T_2$ according to \eqref{eq:VarConst}.
 \item For $\mu+1 \leq j \leq n$ multiply $\alpha_{j-\mu} = \tilde{\beta}_j H_2 \in \galg_{\Phi}\langle \partial, \cum \rangle$. \label{step:Hright}
 \item Return $(T_1, (\alpha_1, \ldots, \alpha_{n-\mu}), (e_1, \ldots, e_r))$ and 
 $(T_2, (\tilde{\beta}_1, \ldots, \tilde{\beta}_\mu))$.
\end{enumerate}
\end{alg}~\\[-1.2cm]
\end{framed}

\begin{theorem}
 Let $(T, \B, E)$ be regular and $T = T_1T_2$.
 The boundary problems $(T_1, \B_1, E)$ and $(T_2, \B_2)$ computed by Algorithm~\ref{alg:RightFact} are 
 regular and satisfy 
 \begin{equation*} 
  (T, \B, E) = (T_1, \B_1, E) \circ (T_2, \B_2).
 \end{equation*}
\end{theorem}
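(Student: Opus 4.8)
The plan is to verify that Algorithm~\ref{alg:RightFact} produces exactly the factorization whose existence and uniqueness are guaranteed by Theorem~\ref{thm:FactReg}. The key observation is that Steps~1--3 construct a specific basis adapted to the factorization $T = T_1 T_2$, and that $(T_2, \B_2)$ with $\B_2 = \Ll(\tilde\beta_1, \ldots, \tilde\beta_\mu)$ is regular precisely because of the row-echelon normalization in Step~\ref{step:LUDec}. First I would recall that $\Ker T_2 = \Ll(s_1, \ldots, s_\mu)$, so the regularity of $(T_2, \B_2)$ is equivalent (by the evaluation-matrix criterion from Section~\ref{sec:SC}) to the $\mu \times \mu$ matrix $\big(\tilde\beta_i(s_j)\big)_{1\le i,j\le\mu}$ being invertible. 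Since $\tilde\beta_i = \sum_k s_{i,k}\beta_k$, this matrix is the top-left $\mu\times\mu$ block of $SM$; because $SM$ is in reduced row echelon form and $M = \beta(s)$ has full column rank $\mu$ (as $(T,\B)$ is semi-regular, hence so is $(T_2, \B)$ by restriction), the matrix $SM$ is precisely $\big(\begin{smallmatrix} I_\mu \\ 0 \end{smallmatrix}\big)$, so the block is $I_\mu$. This simultaneously shows that $\tilde\beta_j \in \B \cap (\Ker T_2)^\perp$ for $j > \mu$ (these are the rows of $SM$ that vanish on all of $\Ker T_2$), which is what makes Step~\ref{step:Hright} land in the right space.

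Next I would identify the output left factor with the one prescribed by Theorem~\ref{thm:FactReg}, namely $\B_1 = H_2^*(\B \cap (\Ker T_2)^\perp)$. Since $S$ is invertible, $\tilde\beta_1, \ldots, \tilde\beta_n$ is again a basis of $\B$, and by the echelon structure $\tilde\beta_{\mu+1}, \ldots, \tilde\beta_n$ is a basis of $\B \cap (\Ker T_2)^\perp$: each of these covectors kills $\Ker T_2$ by the argument above, they are linearly independent, and $\dim\big(\B \cap (\Ker T_2)^\perp\big) = n - \mu$ by semi-regularity of $(T_2, \B)$ together with $\dim\B = n$ and $\dim\Ker T_2 = \mu$. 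Applying $H_2^*$, the covectors $\alpha_{j-\mu} = \tilde\beta_j H_2 = H_2^*(\tilde\beta_j)$ for $\mu+1 \le j \le n$ form a basis of $\B_1$; here I use that $H_2^*$ is injective on $(\Ker T_2)^\perp$, which follows since $H_2$ is a right inverse of the surjection $T_2$ (if $\beta \circ H_2 = 0$ and $\beta$ vanishes on $\Ker T_2$, then $\beta$ vanishes on $\img H_2 \dotplus \Ker T_2 = V$). Note the construction is independent of the choice of right inverse $H_2$ up to the same identification, matching the "arbitrary right inverse" clause in Theorem~\ref{thm:FactReg}.

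With these two identifications in place, the theorem follows by directly invoking Theorem~\ref{thm:FactReg}: that result asserts the composite $(T_1, \B_1, E) \circ (T_2, \B_2)$ equals $(T, \B, E)$ and that both factors are regular, for \emph{any} $\B_2 \le \B$ making $(T_2, \B_2)$ regular and for the uniquely determined $(T_1, \B_1, E)$. Since Step~2 guarantees $(T_2, \B_2)$ is regular and Steps~3,5 produce exactly the $\B_1 = H_2^*(\B \cap (\Ker T_2)^\perp)$ named in Theorem~\ref{thm:FactReg}, the output boundary problems are the ones that theorem describes, and the factorization identity holds. One should also note that $E$ is carried through unchanged and remains an exceptional space for $(T_1, \B_1)$ by the conclusion of Theorem~\ref{thm:FactReg}, so $(T_1, \B_1, E)$ is genuinely a regular generalized boundary problem.

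I expect the main obstacle to be the bookkeeping in the first paragraph: precisely pinning down that the reduced row echelon form of $SM$ is $\big(\begin{smallmatrix} I_\mu \\ 0 \end{smallmatrix}\big)$ requires knowing $\rk M = \mu$, which in turn rests on semi-regularity of $(T,\B)$ transferring to $(T_2, \B)$ — this uses that $\Ker T_2 \subseteq \Ker T$, so $\Ker T_2 \cap \Bo \subseteq \Ker T \cap \Bo = \{0\}$. Everything else is then a matter of tracking how $S$ and $H_2^*$ act on the relevant bases and quoting Theorem~\ref{thm:FactReg}.
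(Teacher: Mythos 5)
Your proof is correct and takes essentially the same route as the paper's: reduce to Theorem~\ref{thm:FactReg}, use that $\Ker T_2\le\Ker T$ transfers semi-regularity to $(T_2,\B)$ so that $\rk M=\mu$ and $SM=\bigl(\begin{smallmatrix}I_\mu\\0\end{smallmatrix}\bigr)$, conclude regularity of $(T_2,\B_2)$ from the identity block, and identify $\tilde\beta_{\mu+1},\ldots,\tilde\beta_n$ as a basis of $\B\cap(\Ker T_2)^{\perp}$ so that Step~\ref{step:Hright} yields exactly $\B_1=H_2^*(\B\cap(\Ker T_2)^{\perp})$. The only minor divergence is the dimension count $\dim(\B\cap(\Ker T_2)^{\perp})=n-\mu$, which the paper obtains via the duality formula of \cite[Cor.~A.15]{RegensburgerRosenkranz2009} while you get it (implicitly) from $\rk M=\mu$; your added observation that $H_2^*$ is injective on $(\Ker T_2)^{\perp}$ is not needed for the statement but is harmless.
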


\begin{proof}
 In view of Theorem~\ref{thm:FactReg}, we only have to show that $(T_2, \Ll(\tilde{\beta}_1, \ldots , \tilde{\beta}_{\mu}))$ 
 is regular with $ \tilde{\beta}_i \in \B$ for $1\leq i \leq \mu$,  and that the Stieltjes conditions
 $\tilde{\beta}_{\mu+1}, \ldots, \tilde{\beta}_n$ computed in Step 3, are a basis of $\B \cap (\Ker T_2)^{\perp}$.
 
 First we observe that in Step 3 we obviously have $\tilde{\beta}_i \in \B$ for all $i$. Since $\Ker T_2 \leq \Ker T$ and since $(T, \B)$ is semi-regular, $(T_2, \B)$ is also semi-regular.
 Hence the evaluation matrix $M$ computed in Step 1 has rank $\mu$,  
 and the first $\mu$ rows of $SM$ in Step 2 give the $\mu \times \mu$ identity matrix. 
 Since $SM$ is the evaluation matrix $\tilde{\beta}(s)$---meaning that the $\tilde{\beta}$ are applied to the fundamental system of 
 $\Ker T_2$---the boundary problem $(T_2, \Ll(\tilde{\beta}_1, \ldots , \tilde{\beta}_{\mu}))$ is regular.
 
 The lower part of $SM$ is the $(n-\mu) \times \mu$ zero matrix, hence we know that
 \begin{equation*} 
  \Ll(\tilde{\beta}_{\mu+1}, \ldots, \tilde{\beta}_n) \leq \B \cap (\Ker T_2)^{\perp}.
 \end{equation*}
 Moreover, $\tilde{\beta}_{\mu+1}, \ldots, \tilde{\beta}_n$ are linearly independent, since $S$ is regular. 
 For proving that the spaces are equal, it suffices to show that $\dim (\B \cap (\Ker T_2)^{\perp})= n - \mu$.
 Since $\B$ is finite dimensional, also $\dim (\B \cap (\Ker T_2)^{\perp}) < \infty$, and we have
 \begin{equation*}
  \dim (\B \cap (\Ker T_2)^{\perp}) = \codim(\B \cap (\Ker T_2)^{\perp})^{\perp} 
  = \codim (\Bo + \Ker T_2), 
 \end{equation*}
 where we use the duality principle for switching between a vector space and its dual 
 as explained in \cite[Sec.~A.1]{RegensburgerRosenkranz2009}. Furthermore, \cite[Cor.~A.15]{RegensburgerRosenkranz2009}
 yields
 \begin{equation*}
  \codim(\B \cap (\Ker T_2)^{\perp})^{\perp} = \dim (\Ker T_2 \cap \Bo ) + \codim \Bo - \dim \Ker T_2.
 \end{equation*}
 Since $(T_2, \B)$ is semi-regular, the first summand vanishes, and moreover using that $\codim \Bo = \dim \B$, we
 obtain $ \dim (\B \cap (\Ker T_2)^{\perp}) = n- \mu$.
\end{proof}

\begin{example} \label{ex:fac2}
 We demonstrate the steps of the algorithm in detail, continuing Example \ref{ex:fac1}.
 Obviously, a regular fundamental system of $\D^2$ is given by $(1, x)$ and a right inverse of $\D^2$
 is $\A^2$. The evaluation matrix
 $M$ and the transformation matrix $S$ are given as
 \begin{equation*}
  M= \begin{pmatrix}
   0 & 1 \\ 0 &0\\1 &1\\0 &1\\0&0
  \end{pmatrix} \quad \text{and} \quad
 S= \begin{pmatrix}
   -1 &0&1&0&0\\1&0&0&0&0\\0&1&0&0&0\\-1&0&0&1&0\\0&0&0&0&1
  \end{pmatrix}.
 \end{equation*}
 Hence we compute the $\tilde{\beta}$ as follows
 \begin{equation*}
  \tilde{\beta}_1 = \E_1 - \E_0\D, \;
  \tilde{\beta}_2 = \E_0\D,\;
  \tilde{\beta}_3 = \E_0 \D^3,\;
  \tilde{\beta}_4 = \E_1\D - \E_0\D,\;
  \tilde{\beta}_5 = \E_1\D^3.
 \end{equation*}
 The space of boundary conditions for the right-hand factor is spanned by $\tilde{\beta}_1$ and $\tilde{\beta}_2$, which means 
 that $\B_2 = \Ll(\E_1, \; \E_0\D)$. We multiply $\tilde{\beta}_3, \tilde{\beta}_4$ and $\tilde{\beta}_5$ by $\A^2$ to
 obtain a basis of $\B_1$. Since $\D\A=1$ and $\E_0\A=0$, this yields
 $\B_1 = \Ll(\E_0\D, \E_1\A, \E_1\D)$, as already seen in Example \ref{ex:fac1}.
 \end{example}
 
 The previous example shows that the computation of the boundary conditions $\B_1$ and $\B_2$ reduces to a linear algebra
 problem, which in our implementation is solved by the \maple\ procedures from the \texttt{LinearAlgebra}
 package. In the \pack\ package, the most time-consuming operations---factoring the differential operator and computing 
 a fundamental system for the right factor---are done in preprocessing steps via the \maple\ commands \texttt{dsolve} and
 \texttt{DFactor}.
  
 \begin{example}\label{ex:RatCoeff}
  
  On the interval $[0, 1]$, we consider the differential operator 
  \begin{multline*}
   \D^4 + \frac{5 x^2+4 x + 1}{(x+1)(x^2+1)} \D^3 + \frac{x^7+x^6+2 x^5+2 x^4 -x^3-5x^2+14x+10}{(x+1)(x^2+1)^2} \D^2\\
   + \frac{2(2x^8+2x^7+4x^6+4x^5 + x^4+2x^3-14x^2-16x+3)}{(x^2+1)^3(x+1)} \D \\
   + \frac{2(x^7+x^6+2x^5+2x^4 +5x^3+7x^2-4x-2)}{(x^2+1)^3(x+1)}
  \end{multline*}
  with coefficients in $\Q(x)$.
  
  \begin{center}
\begin{minipage}{0.999\textwidth}
\begin{framed}~\\[-1cm]\small
\begin{alltt} \negsskip \scriptsize
 > a3 := (5*x^2+4*x+1)/((x+1)*(x^2+1)):
 > a2 := (x^7 + x^6 + 2*x^5 + 2*x^4-x^3-5*x^2+14*x+10)/((x+1)*(x^2+1)^2):
 > a1 := 2*(2*x^8+2*x^7+4*x^6+4*x^5+x^4+2*x^3-14*x^2-16*x+3)/((x^2+1)^3*(x+1)):
 > a0 := 2*(x^7+x^6+2*x^5+2*x^4+5*x^3+7*x^2-4*x-2)/((x^2+1)^3*(x+1)):
 > t := d^4 + a3.(d^3) + a2.(d^2) + a1.d + a0:
 > b1 := e(0): b2 := e(0).d: b3 :=e(0).(d^2): b4 := e(1): b5 := e(1).d: \small
 > bp := GBP(t, BC(b1, b2, b3, b4, b5), ES(1)):
 > FactorBoundaryProblem(bp);
\end{alltt}  \scriptsize 
\begin{align*}
&\GBP(x^2+\frac{1}{1+x} . \D+\D^2, \BC(\E[0], \E[1] . \A . x^2+\E[1] . \A, \E[1] . \A . x^3+\E[1] . \A . x), \ES(1)),\\
&\BP(\frac{2x}{x^2+1}+\D, \BC(\E[0])),\;\;\BP(\frac{2x}{x^2+1}+\D, \BC(\E[0]))
%
\end{align*} 
\end{framed}\normalsize
\end{minipage}
\end{center}
  
 \end{example}
 
 We conclude this section with an example for a more general differential operator, which cannot be factored
 with the \texttt{DFactor} command. In this case, a factorization of the differential operator has to be provided
 in the input data.
 
 \begin{example} \label{ex:exp}
 We consider a differential operator on $\Omega=(0, \infty)$ with coefficients in $C^{\infty}(\Omega)$.
 \begin{equation*}
  \D^2 - \frac{\exp(x) + \exp(2x) -1}{\exp(x)-1} \D + \frac{\exp(2x)}{\exp(x)-1}
 \end{equation*}
with boundary conditions $\E_1, \E_2, \E_3$, and the factorization 
\begin{equation*}
 T_1 = \D- \frac{\exp(2x)}{\exp(x)-1}\quad \text{and} \quad T_2 = \D-1.
\end{equation*}

   \begin{center}
\begin{minipage}{0.999\textwidth}
\begin{framed}~\\[-1cm]\small
\begin{alltt} 
 > t := d^2-(exp(x)+exp(2*x)-1)/(exp(x)-1).d+exp(2*x)/(exp(x)-1):
 > t1 := d-exp(2*x)/(exp(x)-1):
 > t2 := d-1:
 > b1 := e(1): b2 := e(2): b3 := e(3):
 > bp := GBP(t, BC(b1, b2, b3), ES(1)):
 > FactorBoundaryProblem(bp, t1, t2);
\end{alltt}  \scriptsize 
\begin{align*}
&\GBP(\frac{-\exp(2x)}{\exp(x)-1} +\D, \BC(\E[1].\A.\exp(-x), \E[2].\A.\exp(-x) -\E[3].\A.\exp(-x) ), \ES(1)),\\
&\BP(-1+\D, \BC(\E[1]))
%
\end{align*} 
\end{framed}\normalsize
\end{minipage}
\end{center}

 \end{example}

\section{Conclusion and Outlook}
\label{sec:outlook}
 
As outlined in the previous section, describing all possible factorizations of a generalized boundary problem along a fixed factorization of the differential operator is quite involved. 
 Nevertheless, it would sometimes be preferable to have the generalized factor on the right, 
for which we assume symbolic solutions for the differential operator.
In this section, we describe some first steps in this direction that rely on the following result.
 
 \begin{theorem} \label{thm:ChoseE2}
Let $(T_1, \B_1)$ and $(T_2, \B_2)$ be semi-regular with $T_1 \colon V \to W$ and $T_2 \colon U \to V$.
Then there exists an exceptional space $E_2 \leq V$ of $(T_2, \B_2)$ such that the 
respective generalized Green's operators satisfy the reverse order law for all possible 
exceptional spaces $E_1 \leq W$ of $(T_1, \B_1)$.
\end{theorem}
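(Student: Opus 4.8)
The goal is to find one exceptional space $E_2$ for $(T_2,\B_2)$ that makes the reverse order law hold \emph{simultaneously} for every admissible $E_1$. I would work from the dual form of condition~\ref{algOI3} in Theorem~\ref{thm:G2G1OI}, namely
\begin{equation*}
 \Bo_1 \leq \Co_2 \dotplus (E_2 \cap \Bo_1) \dotplus (E_2 \cap K_1),
\end{equation*}
since this is the condition that most transparently isolates the role of $E_2$ and $K_1 = T_1^{-1}(E_1)$. The observation to exploit is that $K_1 = G_1(E_1) \dotplus \Ker T_1$ by~\eqref{eq:EasyKerComp}, so $\Ker T_1 \leq K_1$ no matter how $E_1$ is chosen; hence it suffices to arrange $E_2$ so that $(E_2 \cap \Bo_1) \dotplus (E_2 \cap \Ker T_1)$ already captures ``enough'' of $\Bo_1$ relative to $\Co_2$, and then monotonicity in $K_1$ does the rest.

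\textbf{Key steps.} First I would fix a right inverse $H_2$ of $T_2$ (available from a regular fundamental system of $T_2$) and recall from Section~\ref{sec:GGOP} that $T_2(\Bo_2)$ has finite codimension in $V$, with $\Co_2 = T_2(\Bo_2)^\perp$ finite-dimensional. Second, I would analyze $\Bo_1 \cap T_2(\Bo_2)$: because $\Bo_1$ is finite-codimensional in $V$ and $T_2(\Bo_2)$ is finite-codimensional, their intersection is finite-codimensional, and a complement $N$ of $\Bo_1 \cap T_2(\Bo_2)$ inside $\Bo_1$ is finite-dimensional. Third—this is the crux—I would split $N$ into a part lying in $\Ker T_1$ and a part that can be pushed into $E_2$: write $N_0 = N \cap \Ker T_1$, choose a complement $N_1$ of $N_0$ in $N$, and \emph{define} $E_2$ to contain $N_1$ (together with whatever is forced by regularity). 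Fourth, I would verify that such an $E_2$ can be taken to be an exceptional space of $(T_2,\B_2)$, i.e.\ $T_2(\Bo_2) \dotplus E_2 = V$: since $N_1 \cap T_2(\Bo_2) \leq N \cap T_2(\Bo_2) = \{0\}$ by construction of $N$, the sum $T_2(\Bo_2) + N_1$ is direct, and one extends $N_1$ to a full complement by the standard basis-completion argument, checking that the completing vectors can be chosen in a way compatible with the Stieltjes/evaluation-matrix framework. Fifth, with $E_2 \supseteq N_1$ fixed and $E_1$ arbitrary, I would check the dual inclusion: given $v \in \Bo_1$, decompose $v = w + n$ with $w \in \Bo_1 \cap T_2(\Bo_2) \leq T_2(\Bo_2)$ and $n \in N$; then split $n = n_0 + n_1$ with $n_0 \in N_0 \leq \Ker T_1 \leq K_1$ and $n_1 \in N_1 \leq E_2$. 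One still has to place $w$ inside the right-hand side: since $w \in T_2(\Bo_2)$ and we want $\Co_2 \dotplus \cdots$, here I would instead argue at the level of $\Co_2 = T_2(\Bo_2)^\perp$ using the duality principle from \cite[Sec.~A.1]{RegensburgerRosenkranz2009}, reformulating everything back into condition~\ref{algOI3} proper, $\C_2 + (\B_1 \cap E_2^{\perp}) \geq \B_1 \cap (E_2 \cap K_1)^{\perp}$, and showing the right-hand space shrinks enough once $E_2 \cap \Ker T_1 \leq E_2 \cap K_1$ is forced for every $E_1$.

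\textbf{Main obstacle.} The delicate point is the fourth step combined with the fifth: I must choose $E_2$ large enough that the intersections $E_2 \cap \Bo_1$ and $E_2 \cap K_1$ do the required covering for \emph{all} $E_1$, yet small enough (and ``generic'' enough with respect to evaluation matrices) that $E_2$ remains a legitimate exceptional space, i.e.\ $T_2(\Bo_2) \dotplus E_2 = V$ with $E_2$ spanned by functions realizable in the integro-differential setting. The tension is that enlarging $E_2$ to help the inclusion could destroy the direct-sum condition for $(T_2,\B_2,E_2)$; resolving it requires showing that the obstruction space $N_1$ is automatically transverse to $T_2(\Bo_2)$, which is exactly why I introduced $N$ as a complement of $\Bo_1 \cap T_2(\Bo_2)$ \emph{inside} $\Bo_1$ rather than an arbitrary complement. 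I would expect the bookkeeping of which inclusions become automatic once $\Ker T_1 \leq K_1$ is built in—i.e.\ the proof that the chosen $E_2$ works uniformly in $E_1$—to be where most of the real work lies, with the dual/annihilator translations being routine once the complement structure is set up correctly.
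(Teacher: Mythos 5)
Your overall skeleton---take a complement $N$ of $\Bo_1 \cap T_2(\Bo_2)$ inside $\Bo_1$, observe $N \cap T_2(\Bo_2) = \{0\}$, and enlarge it to a complement $E_2$ of $T_2(\Bo_2)$ in $V$---is exactly the paper's construction, and your verification that $T_2(\Bo_2) \dotplus E_2 = V$ can be achieved is fine (the finite-codimension bookkeeping is not even needed; any complement works). But your verification of the reverse order law has a genuine defect: the split $N = N_0 \dotplus N_1$ with $N_0 = N \cap \Ker T_1$, placing only $N_1$ inside $E_2$, does not do what you claim. The dual condition~\eqref{eq:OI3} requires the summand $E_2 \cap K_1$, not $K_1$, so the observation $n_0 \in \Ker T_1 \leq K_1$ is useless unless $n_0$ also lies in $E_2$---which, by your construction, it need not. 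Your argument is rescued only by an accident you never invoke: since $N \leq \Bo_1$ and $(T_1, \B_1)$ is semi-regular, $N_0 \leq \Bo_1 \cap \Ker T_1 = \{0\}$, so the problematic summand vanishes and $N_1 = N$. The clean route (the paper's) is simply to put the whole complement $N$ into $E_2$ and never bring $\Ker T_1$ or $K_1$ into the picture at all; that is precisely what makes the choice of $E_2$ uniform in $E_1$.

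The step you defer as the remaining ``real work''---placing $w \in \Bo_1 \cap T_2(\Bo_2)$ into the right-hand side---rests on a misreading of the notation: $\Co_2$ in~\eqref{eq:OI3} is $\C_2^{\perp} \leq V$, the orthogonal of the compatibility conditions $\C_2 = T_2(\Bo_2)^{\perp}$, hence $T_2(\Bo_2) \leq \Co_2$ and $w$ is covered immediately. No dual reformulation or ``shrinking'' argument is needed, and leaving it as a sketch makes the proof incomplete exactly where it should be a one-liner: $\Bo_1 = \bigl(\Bo_1 \cap T_2(\Bo_2)\bigr) \dotplus N \leq \Co_2 + (E_2 \cap \Bo_1)$, which holds for every $E_1$ since $K_1$ does not appear. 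Finally, your concern about $E_2$ being ``realizable'' or generic with respect to evaluation matrices is beside the point: the statement is a pure existence result and the paper's proof is explicitly non-constructive, so an abstract complement suffices.
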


\begin{proof}
 Let $V_1$ be a complement of $\Bo_1 \cap T_2(\Bo_2)$ in $\Bo_1$, i.e.,
 \[\Bo_1 = \bigl(\Bo_1 \cap T_2(\Bo_2) \bigr) \dotplus V_1.\]
 Since $V_1 \leq \Bo_1$ and because of the direct sum, we then have
 \[V_1 \cap T_2(\Bo_2) 
  = \bigl( V_1 \cap T_2(\Bo_2) \bigr) \cap \Bo_1
  = V_1 \cap \bigl( T_2(\Bo_2)  \cap \Bo_1 \bigr) = \{0\}.
 \]
Enlarging $V_1$ to a complement $E_2$ of $T_2(\Bo_2)$, so that $V_1 \leq E_2$ and
$T_2(\Bo_2) \dotplus E_2 = V$ yields 
\[ \Bo_1 = \bigl(\Bo_1 \cap T_2(\Bo_2) \bigr) \dotplus V_1 \leq T_2(\Bo_2) + (E_2 \cap \Bo_1).\]
Hence Condition~\eqref{eq:OI3} is satisfied.
\end{proof}

The previous proof is not constructive; choosing exceptional spaces for arbitrary Stieltjes
boundary conditions leads to an interpolation problem with integral conditions. However,  one can apply the 
following strategy to obtain a factorization of a semi-regular boundary problem $(T, \B)$ into
a regular problem $(T_1, \B_1)$ as a left factor
and a generalized boundary problem as a right factor. In this case we will ignore the exceptional
space in the beginning and (try to) choose it accordingly in the end.

\begin{enumerate}
 \item Apply (a version of) Algorithm~\ref{alg:RightFact} to compute a semi-regular factor $(T_1, \B_1)$ and
  a regular factor $(T_2, \B_2)$.
 \item Let $\alpha_1, \ldots, \alpha_n$ be a basis of $\B_1$. Choose a subset (wlog we write $\alpha_{1}, \ldots, \alpha_{m}$), 
 such that $(T_1, \Ll(\alpha_{1}, \ldots, \alpha_{m}))$ is regular. Such a subset exists by Lemma~\ref{lem:intersections}
 and can also be computed.
 \item Compute $T_2^*(\alpha_{m+1}), \ldots, T_2^*(\alpha_{n})$.
 \item Since $(T_2, \B_2)$ is regular, $(T_2, \B_2 + \Ll(T_2^*(\alpha_{m+1}), \ldots, T_2^*(\alpha_{n}))$ is 
 semi-regular, and from Theorem \ref{thm:ChoseE2} we obtain an appropriate exceptional space.
\end{enumerate}

\begin{example}
 In Example \ref{ex:fac2}, we have computed the spaces of boundary conditions 
 \begin{equation*}
  \B_1 = \Ll(\E_0\D, \E_1 \D, \E_1\A) \quad \text{and} \quad
  \B_2 = \Ll(\E_0\D, \E_1)
 \end{equation*}
 for the differential operators $T_1 = \D^2 -1$ and $T_2 = \D^2$. The boundary problem $(T_1, \Ll\E_0\D, \E_1\D)$ is regular,
 and we compute 
 \begin{equation*}
   T_2^*(\E_1\A)= \E_1\A\D^2 = \E_1\D - \E_0\D.
 \end{equation*}
 Computation of the compatibility conditions for the modified right-hand factor
 $(T_2, \Ll(\E_0\D, \E_1, \E_1\D))$ yields $\C_2 = \Ll(\E_1\A)$, hence we may choose for example
 $E_2 = \Ll(\exp(x))$. Then, since $E_2 \leq \Ker T_1$, Condition~\ref{algOI2} of Theorem~\ref{thm:G2G1OI}
 is trivially fulfilled.
\end{example}

Closer investigation of Algorithm \ref{alg:RightFact} and the above procedure indicates some redundancy in first applying
$H_2^*$ and afterwards $T_2^*$. However, we cannot decide in advance for which choices of $\alpha_i$ the boundary problem
$(T_1, \Ll(\alpha_1, \ldots, \alpha_m))$ will be regular. We will study this and related question for obtaining different factorizations of generalized boundary problems in future work.

We are also investigating algebraic and algorithmic aspects of generalized Green's  operators for certain classes of linear partial differential equations and boundary conditions. While the linear algebra setting used in this paper is in principle applicable to partial differential equations, examples and constructive methods for singular partial boundary problems are to be studied. In this context, we refer to~\cite{RegensburgerRosenkranz2009,RosenkranzRegensburgerTecBuchberger2009,Tec2011,RosenkranzPhisanbut2013}
for regular partial boundary problems. The setting in~\cite{RosenkranzPhisanbut2013} includes also  inhomogeneous boundary conditions and a rewrite system for partial integro-differential operators (PIDOS) including linear substitution.

\bigskip

\noindent \textbf{Acknowledgements.} We would like to thank the anonymous referees for their numerous helpful comments. G.R.~was partially supported by the Austrian Science Fund (FWF): J 3030-N18.


\end{document}